






\newif\ifarxiv
\arxivtrue

\ifarxiv 
\documentclass[sigconf, nonacm, authorversion]{aamas} 
\fi

\ifarxiv \else
\documentclass[sigconf]{aamas} 
\fi


\usepackage{algorithm}

\usepackage{enumerate}
\usepackage[shortlabels]{enumitem}



\usepackage{amsmath, amsthm}
\usepackage{algpseudocode}
\usepackage{multicol}
\usepackage{tikz}
\usetikzlibrary{shapes.misc, fit, decorations.pathreplacing,calligraphy,positioning} 
\usepackage{float}
\usepackage{harmony}
\usepackage{scalerel}
\usepackage{cleveref}
\usepackage{xcolor}
\usepackage{MnSymbol}
\usepackage{bm,xspace} 

\usepackage{soul}

\newcommand{\rust}[1]{{\color{black}#1}}
\newcommand{\maks}[1]{{\color{black}#1}}

\algtext*{EndIf}
\algtext*{EndFor}

\algnewcommand\algorithmiccase{\textbf{case}}
\algdef{SE}[CASE]{Case}{EndCase}[1]{\algorithmiccase\ #1}{\algorithmicend\ \algorithmiccase}%
\algtext*{EndCase}

\makeatletter
\newenvironment{breakablealgorithm}
  {
   \begin{center}
     \refstepcounter{algorithm}
     \hrule height.8pt depth0pt \kern2pt
     \renewcommand{\caption}[2][\relax]{
       {\raggedright\textbf{\fname@algorithm~\thealgorithm} ##2\par}%
       \ifx\relax##1\relax 
         \addcontentsline{loa}{algorithm}{\protect\numberline{\thealgorithm}##2}%
       \else 
         \addcontentsline{loa}{algorithm}{\protect\numberline{\thealgorithm}##1}%
       \fi
       \kern2pt\hrule\kern2pt
     }
  }{
     \kern2pt\hrule\relax
   \end{center}
  }
\makeatother


\newcommand{\lcirc}[1]{\mathchoice
{\mathbin{\vcenter{\hbox{\scaleobj{.75}{\mbox{\Kr{$#1$}}}}}}}
{\mathbin{\vcenter{\hbox{\scaleobj{.75}{\mbox{\Kr{$#1$}}}}}}}
{\mathbin{\vcenter{\hbox{\scaleobj{.5}{\mbox{\Kr{$#1$}}}}}}}
{\mathbin{\vcenter{\hbox{\scaleobj{.5}{\mbox{\Kr{$#1$}}}}}}}
}

\newcommand{\suggestion}[1]{{\color{blue}#1}}


\newcommand{\cL}{\mathcal{L}}

\newcommand{\lb}{\langle}
\newcommand{\rb}{\rangle}
\newcommand{\llb}{\lb \! \lb}
\newcommand{\rrb}{\rb \! \rb}

\newcommand{\X}{\mathsf{X}}

\newcommand{\U}{\mathsf{U}}
\newcommand{\R}{\mathsf{R}}
\newcommand{\G}{\mathsf{G}}

\newcommand{\F}{\mathsf{F}}



\newcommand{\AG}{\mathit{Agt}}
\newcommand{\AP}{\mathit{Prop}}
\newcommand{\Nom}{\mathit{Nom}}
\newcommand{\Act}{\mathit{Act}}



\newcommand{\ATL}{\mathsf{ATL}}

\newcommand{\LAMB}{\mathsf{LAMB}}
\newcommand{\HATL}{\mathsf{HATL}}

\protected\def\Ptime{\ifmmode \mbox{\sc P} \else {\sc P}\xspace\fi}
\protected\def\Pspace{\ifmmode \mbox{\sc Pspace} \else {\sc Pspace}\xspace\fi}
\protected\def\Exptime{\ifmmode \mbox{\sc Exptime} \else {\sc Exptime}\xspace\fi}
\protected\def\Dexptime{\ifmmode \mbox{\sc 2Exptime} \else {\sc 2Exptime}\xspace\fi}
\protected\def\NONELEMENTARY{\ifmmode \mbox{\sc NonElementary} \else {\sc NonElementary}\xspace\fi}
\protected\def\NP{\ifmmode \mbox{\sc NP} \else {\sc NP}\xspace\fi}

\newtheorem{remark}{Remark}

\usetikzlibrary{chains,fit,shapes}

\tikzset{%
round/.style={circle, draw=gray!60,fill=gray!5, very thick,minimum size=7mm, align=center},
dot/.style={draw, circle, minimum size=2mm,inner sep=0pt,outer sep=0pt,fill=black},
}


\usepackage{balance} 



\makeatletter
\gdef\@copyrightpermission{
  \begin{minipage}{0.2\columnwidth}
   \href{https://creativecommons.org/licenses/by/4.0/}{\includegraphics[width=0.90\textwidth]{by}}
  \end{minipage}\hfill
  \begin{minipage}{0.8\columnwidth}
   \href{https://creativecommons.org/licenses/by/4.0/}{This work is licensed under a Creative Commons Attribution International 4.0 License.}
  \end{minipage}
  \vspace{5pt}
}
\makeatother

\setcopyright{ifaamas}
\acmConference[AAMAS '25]{Proc.\@ of the 24th International Conference
on Autonomous Agents and Multiagent Systems (AAMAS 2025)}{May 19 -- 23, 2025}
{Detroit, Michigan, USA}{Y.~Vorobeychik, S.~Das, A.~Nowé  (eds.)}
\copyrightyear{2025}
\acmYear{2025}
\acmDOI{}
\acmPrice{}
\acmISBN{}


\acmSubmissionID{<<OpenReview submission id>>}


\title[Changing the Rules of the Game]{Changing the Rules of the Game:\\
Reasoning about Dynamic Phenomena in Multi-Agent Systems}\thanks{This is an extended version of the same title paper that will appear in AAMAS
2025. This version contains a technical appendix with proof details that, for space reasons, do
not appear in the AAMAS 2025 version. \textcolor{blue}{In this version we also correct the proof of Theorem 4.1 (the equivalence of $\mathsf{HATL}$ and $\mathsf{SLAMB}$). The previous proof was based on reduction axioms and it worked only for \emph{global} substitutions (like in $\mathsf{DEL}$ \cite{kooi07,kuijer14}). In this work, the substitutions are \emph{local}, and thus we should have reflected this in the proof. We provide a new proof of the claimed result. All new text related to the problem is in BLUE. We would like to thank St{\'e}phane Demri for spotting the problem and alerting us.}}



\author{Rustam Galimullin}
\affiliation{
  \institution{University of Bergen}
  \city{Bergen}
  \country{Norway}}
\email{rustam.galimullin@uib.no}

\author{Maksim Gladyshev}
\affiliation{
  \institution{Utrecht University}
  \city{Utrecht}
  \country{The Netherlands}}
\email{m.gladyshev@uu.nl}

\author{Munyque Mittelmann}
\affiliation{
  \institution{University of Naples Federico II}
  \city{Naples}
  \country{Italy}}
\email{munyque.mittelmann@unina.it}

\author{Nima Motamed}
\affiliation{
  \institution{Utrecht University}
  \city{Utrecht}
  \country{The Netherlands}}
\email{n.motamed@uu.nl}


\begin{abstract}
The design and application of multi-agent systems (MAS) require reasoning about the effects of modifications on their underlying structure.  
In particular, such changes may impact the satisfaction of system specifications and the strategic abilities of their autonomous components. 
In this paper, we are concerned with the problem of verifying and synthesising modifications (or \textit{updates}) of MAS. 
We propose an extension of the Alternating-Time Temporal Logic ($\ATL$) that enables reasoning about the dynamics of model change, called the \textit{Logic for $\ATL$ Model Building} ($\LAMB$). We show how $\LAMB$ can express various intuitions and ideas about the dynamics of MAS, from normative updates to mechanism design.
As the main technical result, we prove that, while being strictly more expressive than $\ATL$, $\LAMB$ enjoys a \Ptime-complete model-checking procedure.

\end{abstract}



\keywords{\maks{Strategy Logics; Model Change; Formal Verification}}


         
\newcommand{\BibTeX}{\rm B\kern-.05em{\sc i\kern-.025em b}\kern-.08em\TeX}


\begin{document}


\pagestyle{fancy}
\fancyhead{}


\maketitle 


\section{Introduction}

Mechanism Design is a subfield of game theory  concerned with the design of  
mathematical structures (i.e. \textit{mechanisms}) 
describing the interaction of strategic agents 
that achieve 
desirable economic properties 
under the assumption of rational behavior 
\cite{Nisan2007}.  
Although it originated in economics, mechanism design provides an important foundation for the creation and analysis of multi-agent systems (MAS) \cite{dash2003computational,phelps2010evolutionary}. 
In numerous situations, the protocols and institutions describing interactions have already been designed and implemented. When those do not comply with the designer's objective (i.e. the economic properties) their complete redesign may not be feasible. For instance, a university with 
a selection procedure seen as  \textit{unfair}, 
would avoid implementing an entirely new procedure, but could be willing to adjust the existing one. 

Although logic-based approaches have been widely used for the verification \cite{Clarke2018}  and synthesis \cite{david2017program} of MAS, most research focuses on static or parametrised models and does not capture the dynamics of model change. One of the classic static approaches 
in the field is the Alternating-time Temporal Logic ($\ATL$) \cite{alur2002}.  $\ATL$ expresses conditions on the strategic abilities of agents interacting in a MAS, represented by a concurrent game model (CGM). 
In this paper, we are concerned with the problem of reasoning about the effects of modifications on CGMs. 
To tackle the problem, we extend $\ATL$ in two directions that have not been considered in the literature. 
First, we augment $\ATL$ with 
nominals and hybrid logic operators \cite{ARECES2007821}. The resulting logic is called Hybrid $\ATL$ ($\HATL$). $\HATL$ allows us to verify properties at states named by a given nominal, which cannot be captured by $\ATL$. Second, we propose the \textit{Logic for $\ATL$ Model Building} ($\LAMB$), which enhances $\HATL$ with update operators that describe explicit modular modifications in the model.

We define three fundamental update operators in $\LAMB$. First, we can \textit{change the valuation} of some propositional variable in a particular state to the valuation of a given formula. 
Second, we can \textit{switch the transition} from one state to another, 
which corresponds to modifying agents' abilities in a given state. 
Finally, we can \textit{add} a new state to the model and assign a fresh nominal to it. 
More complex operations, like adding a state, assigning it some propositional variable and adding incoming/outgoing transitions to it can 
be described in our language as sequences of primitive updates.  

Our intuitions on model updates are guided by research in Dynamic Epistemic Logic ($\mathsf{DEL}$) \cite{hvdetal.del:2007}, where one can reason about the effects of epistemic events on agents' knowledge. While epistemic and strategic reasoning are quite different domains, various $\mathsf{DEL}$s have been considered in the strategic setting (see, e.g., \cite{maubert20,delima14,agotnes08}). At the same time, research on updating models for strategic reasoning has been relatively sporadic and predominantly within the area of normative reasoning \cite{alechina18,alechina2022automatic}.


We deem our contribution to be two-fold. On the \textit{conceptual level}, we propose the exploration of general logic-based approaches to dynamic phenomena in MAS, not confined to particular implementation areas. 
On the \textit{technical level}, we propose a new 
formalism, $\LAMB$, to reason about modifications on MAS that
enables the verification of the strategic behavior of agents acting in a changing environment, and the synthesis of modifications on CGMs. 
We show how $\LAMB$ can express
various intuitions and ideas about the dynamics of MAS from
normative updates to mechanism design. We also formally study
the logic (and some of its fragments) from the point of view of expressivity and model-checking complexity. 
Our results show that $\LAMB$ is strictly more expressive than $\HATL$, which, in turn, is strictly more expressive than $\ATL$.  
Finally, we present a \Ptime-complete algorithm for model checking $\LAMB$.

\section{Reasoning About Strategic Abilities in the Changing Environment}

  
\subsection{Models}


Let $\AG = \{1,...,n\}$ be a non-empty finite set of agents. We will call subsets $C \subseteq \AG$ \textit{coalitions}, and complements $\overline{C}$ of $C$ \textit{anti-coalitions}. Sometimes we also call $\AG$ \textit{the grand coalition}. Moreover, let $\AP = \{p, q, ...\}$ and $\Nom = \{\alpha, \beta, ...\}$ be disjoint countably infinite sets of \textit{atomic propositions} and \textit{nominals} correspondingly. Finally, let $\Act = \{a_1, ..., a_m\}$ be a non-empty finite set of actions.

\begin{definition}[Named CGM]\label{def:cgs} \maks{Given a set of atomic propositions $\AP$, nominals $\Nom$ and agents $\AG$, a} \emph{named Concurrent Game Model} (nCGM) is a tuple $M = \lb S, \tau, L\rb$, where:
\begin{itemize}
    \item $S$ is a non-empty finite set of states;
    \item $\tau: 
    S\times \Act^{\AG} \to S$ is a transition function that assigns the outcome state $s' =\tau(s, (a_1,\dots, a_n))$ to a state $s$ and a tuple of actions $(a_1,\dots, a_n)$;
    \item $L: \AP \cup \Nom \to 2^{S}$ is a valuation function such that for all $\alpha \in \Nom: |L(\alpha)|\leqslant 1$ and for all $s \in S$, there is some $\alpha \in \Nom$ such that $L(\alpha) = \{s\}$.
\end{itemize}
We denote an nCGM $M$ with a \rust{designated} state $s$ as $M_s$.
Since all CGMs we are dealing with in this paper are named, we will abuse terminology and call nCGMs just CGMs.

Let $True(s)=\{p\in \AP \mid s\in L(p)\}\cup \{\alpha\in \Nom \mid s\in L(\alpha)\}$ be the set of all propositional variables and nominals that are true in state $s$. We define the \emph{size} of CGM $M$ as $|M| = |\AG| + |\Act| + |S| + |\tau| + \sum\limits_{s\in S}|True(s)|$, where $|\tau| = |S|\cdot |Act|^{|\AG|}$. We call a CGM \emph{finite}, if $|M|$ is finite. In this paper, we restrict our attention to finite models.

\end{definition}

Our models differ from standard CGMs in two ways. First, similarly to hybrid logic \cite{ARECES2007821}, states of our models have names represented by nominals $\Nom$. So, each nominal is assigned to at most one state, but each state may have multiple names. Observe that differently from hybrid logic, we allow nominals to have empty extensions, that is, to not be
assigned to any state. In the next section we will use this property to ensure that once new states are introduced to a model, we always have names available for them.  Finally, we also assume that our models are \textit{properly named}, i.e. each state is assigned some nominal. 



We also assume that all agents have the whole set $\Act$ of actions available to them. Such an assumption is relatively common in the strategy logics literature (see, e.g., \cite{mogavero10,aminof2019probabilistic,BOZZELLI2020199}), and allows for a clearer presentation of our framework.

\begin{definition}[Strategies] Given a coalition $C\subseteq \AG$, an \emph{action profile \rust{for coalition $C$}}, $A_C$,
is an element of $\Act^C$, and 
$\Act^{\AG}$ is the set of all \textit{complete} action profiles, i.e. all tuples $(a_1, \dots, a_n)$ for $\AG = \{1, \dots, n\}$. 

Given an action profile $A_C$, 
     we write $\tau(s, A_C)$ to denote a set $\{
     \tau (s,A)\mid A = A_C \cup A_{\overline{C}} , A_{\overline{C}}\in  Act^{\overline{C}} \}$.
     Intuitively, $\tau(s, A_C)$ is the set of all states reachable by (complete) action profiles that extend a given action profile $A_C$\rust{\footnote{\rust{Here, we here slightly abuse the notation and treat, whenever convenient, action profiles as sets rather than ordered tuples.}}}.
    
    A \emph{(memoryless) strategy profile} for $C$ is a function $\sigma_C: S \times C \to \Act$ with $\sigma_C(s,i)$ being an action agent $i$ takes in $s$.

    Given a CGM $M$, a \emph{play} $\lambda = s_0 s_1 \cdots$ is an infinite sequence of states in $S$ such that for all $i \geqslant 0$, state $s_{i+1}$ is a successor of $s_i$ i.e., there exists an action profile $A \in Act^{Agt}$ s.t. $\tau(s_i, A) = s_{i+1}$. We will denote the $i$-th element of play $\lambda$ by $\lambda[i]$. The set of all plays that can be realised by coalition $C$ following strategy $\sigma_C$ from some given state $s$, denoted by $\Lambda^s_{\sigma_C}$, 
    is defined as
    $$\{\lambda \mid \lambda[0] = s \text{ and } \forall i \in \mathbb{N}: \lambda[i+1] \in \tau(\lambda[i], \sigma_C (\lambda[i]))\}.$$
\end{definition}

\rust{
\begin{remark}
    Note that we consider only memoryless (positional) strategies here. One can also employ memory-full (perfect recall) strategies, where the choice of actions by a coalition depends not on the current state, but on the whole history of the system up to the given moment. We resort to the former for simplicity. Observe, however, that for $\ATL$, the semantics based on these two types of strategies are equivalent \cite{alur2002,schobbens04,jamroga15}.  
    We conjecture that this is also the case for $\LAMB$, and leave it for future work.
\end{remark}
}

\begin{example}
Examples of CGMs are given in Figure \ref{fig::exampleCGM}, where an arrow labelled, for example, by $ab$ denotes the action profile, where agent $1$ takes action $a$ and agent $2$ takes action $b$. 
In a state $s$ of model $M$,  the two agents can make a transition to state $t$, if they synchronise on their actions, i.e. if they choose the same actions. In model $N$, on the other hand, a similar transition can be performed whenever they choose  different actions. 
\begin{figure}[h!]
\centering
\scalebox{0.75}{
\begin{tikzpicture}
\node(-1) at (2,0) {$M$};
\node[circle,draw=black, minimum size=4pt,inner sep=0pt, fill = black, label=below:{$s:\alpha$}](1) at (0,0) {};
\node[circle,draw=black, minimum size=4pt,inner sep=0pt, , label=below:{$t:\beta$}](2) at (4,0) {};

\draw [->,thick](1) to [loop above] node[above, align=left] {$ab,ba$} (1);
\draw [->,thick](1) to [bend right] node[below,align=left] {$aa,bb$} (2);
\draw [->,thick] (2) to [bend right] node[above,align=left] {$aa,bb$} (1);
\draw [->,thick] (2) to [loop above] node[above,align=left] {$ab,ba$} (2);
\end{tikzpicture}
\hspace{10mm}
\begin{tikzpicture}
\node(-1) at (2,0) {$N$};
\node[circle,draw=black, minimum size=4pt,inner sep=0pt, fill = black, label=below:{$s:\alpha$}](1) at (0,0) {};
\node[circle,draw=black, minimum size=4pt,inner sep=0pt, , label=below:{$t:\beta$}](2) at (4,0) {};

\draw [->,thick] (1) to [loop above] node[above, align=left] {$aa,bb$} (1);
\draw [->,thick](1) to [bend right] node[below,align=left] {$ab,ba$} (2);
\draw [->,thick] (2) to [bend right] node[above,align=left] {$ab,ba$} (1);
\draw [->,thick] (2) to [loop above] node[above,align=left] {$aa,bb$} (2);
\end{tikzpicture}
}
\caption{CGMs $M$ and $N$ for two agents and two actions. Propositional variable $p$ is true in black states, and nominals $\alpha$ and $\beta$ are true in their corresponding states.}
\Description{CGMs $M$ and $N$ for two agents and two actions. Propositional variable $p$ is true in black states, and nominals $\alpha$ and $\beta$ are true in their corresponding states.}
\label{fig::exampleCGM}
\end{figure}
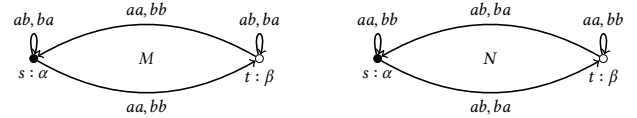 
\end{example}


\subsection{Hybrid $\ATL$}

We start 
with an extension of $\ATL$ with hybrid logic features that allow us to express properties at states named by a given nominal. 

\begin{definition}[Syntax of $\mathsf{HATL}$] The language of Hybrid ATL ($\mathsf{HATL}$) logic is defined recursively as follows
   \begin{alignat*}{3}
        &\mathsf{HATL} &&\thinspace \ni && \enspace \varphi ::= p \mid \alpha \mid @_\alpha \varphi \mid \neg \varphi \mid (\varphi \land \varphi) \mid \llb C \rrb \X \varphi \mid\\ 
        & && &&\mid \llb C \rrb \varphi \U \varphi \mid \llb C \rrb \varphi \mathsf{R} \varphi
\end{alignat*}
where $p \in \AP$, $\alpha \in \Nom$ and $C \subseteq \AG$. 
The fragment of $\mathsf{HATL}$ without $\alpha$ and $@_\alpha \varphi$ corresponds to $\ATL$ 
\footnote{We define $\mathsf{ATL}$ over $\{\llb C \rrb \X\varphi,$ $\llb C \rrb\varphi \U \varphi,$ $\llb C \rrb \varphi \mathsf{R} \varphi\}$  as it is strictly more expressive than $\mathsf{ATL}$ defined over $\{\llb C \rrb \X\varphi,$ $\llb C \rrb\varphi \U \varphi,$ $\llb C \rrb \G \varphi\}$ \cite{Laroussinie2008}.
}. 
\end{definition}

Here, $\llb C \rrb \X\varphi$ means that a coalition $C$ can ensure that $\varphi$ holds in the ne$\X$t state. The operator $\llb C \rrb\varphi \U \psi$ (here $\U$ stands for \textit{$\U$ntil}) means that $C$ has a strategy to enforce $\psi$ while maintaining the truth of $\varphi$. $\llb C \rrb \varphi \mathsf{R} \psi$ means that $C$ can maintain $\psi$ until $\varphi$ \textit{$\R$eleases} the requirement for the truth of $\psi$.  
Derived operators $\llb C \rrb\F\varphi =_{def} \llb C \rrb \top\U\varphi$ and $\llb C \rrb \G\varphi =_{def} \llb C \rrb\bot\R\varphi$ mean ``$C$ has a strategy to \textit{eventually} make $\varphi$ true" and ``$C$ has a strategy to make sure that $\varphi$ is \textit{always} true" respectively. Hybrid logic operator $@_\alpha \varphi$ means ``at state named $\alpha$, $\varphi$ is true". This operator allows us to `switch' our point of evaluation to the state labelled with nominal $\alpha$ in the syntax. 
Given a formula $\varphi\in \mathsf{HATL}$, the \emph{size of} $\varphi$, denoted by $|\varphi|$, is the number of symbols in $\varphi$.

\begin{definition}[Semantics of $\mathsf{HATL}$]\label{def:semantics_static}
Let $M = \lb S, \tau, L\rb$ be a CGM, $s\in S$, $p \in \AP$, $\alpha \in \Nom$, and $\varphi, \psi \in \mathsf{HATL}$. The semantics of $\mathsf{HATL}$  is defined by induction as follows: 
    \begin{alignat*}{3}
        &M_s \models p && \text{ iff } && s \in L(p)\\
        &M_s \models \alpha && \text{ iff } && s \in L(\alpha)\\
        &M_s \models @_\alpha \varphi && \text{ iff } && \exists t \in S: \{t\} = L(\alpha)  \text{ and } M_t \models \varphi\\
        &M_s \models \lnot \varphi && \text{ iff } && M_s \not \models \varphi\\
        &M_s \models\varphi \land \psi && \text{ iff } && M_s \models \varphi \text{ and } M_s \models \psi\\
        &M_s \models \langle \! \langle C \rangle \! \rangle \mathsf{X} \varphi && \text{ iff } && \exists \sigma_C, \forall \lambda \in \Lambda^s_{\sigma_C}: M_{\lambda[1]} \models \varphi\\
        &M_s \models \langle \! \langle C \rangle \! \rangle \psi \mathsf{U} \varphi && \text{ iff } && \exists \sigma_C, \forall \lambda \in \Lambda^s_{\sigma_C}, \exists i \geqslant 0: M_{\lambda[i]} \models \varphi  \\
        & && && \text{ and } M_{\lambda[j]} \models \psi
        \text{ for all } 0 \leqslant j < i\\  
        &M_s \models \langle \! \langle C \rangle \! \rangle \psi \mathsf{R} \varphi && \text{ iff } &&   \exists \sigma_C, \forall \lambda \in \Lambda^s_{\sigma_C}, 
        \forall i \geqslant 0: M_{\lambda[i]} \models \varphi \\
        & && && \text{ or } 
        M_{\lambda[j]} \models \psi  \text{ for some } 0 \leqslant j \leqslant i       
\end{alignat*}     
\end{definition}

Observe that differently from standard hybrid logics \cite{ARECES2007821}, the truth condition for $@_\alpha \varphi$ also requires the state with name $\alpha$ to exist. This modification is necessary as we let some nominals to have empty denotations. 

\begin{example}
Recall CGM $M$ from Figure \ref{fig::exampleCGM}. It is easy to see that 
$M_s \models \alpha$ meaning that state $s$ is named $\alpha$; $M_s \models \llb \{1,2\} \rrb \X \lnot p$, i.e. that from state $s$ the grand coalition can force $\lnot p$ to hold in the next state; and $M_s \models @_\beta \llb \{1,2\} \rrb \G \beta$ meaning that in the state named $\beta$ 
the grand coalition has a strategy to always remain in this state.
\end{example}


\paragraph*{Expressivity of $\mathsf{HATL}$.} 


Even though the interplay between nominals and temporal modalities has been quite extensively studied (see, e.g., \cite{blackburn99,Goranko2000-GORCTL-2,franceschet06,
Kernberger2020}), to the best of our knowledge, the extension of $\ATL$ with nominals has never been considered. 
Our \textit{Hybrid} $\ATL$ provides us with extra expressive power, compared to the standard $\ATL$. 
We can, for instance, formulate safety and liveness properties in terms of \textit{names}.
For example, 
formula
\[\llb C \rrb \G \, \mathsf{safe} \wedge  \llb D \rrb  \neg \mathsf{crashed}\,\U\, {\alpha} \wedge \llb \emptyset \rrb \F \beta\]
states that (1) coalition $C$ can enforce that only $\mathsf{safe}$ states will be visited, 
(2) coalition $D$ can 
avoid crashing until a state named 
$\alpha$ is visited, and (3) no matter what the agents $\AG$ do, a state named ${\beta}$ will eventually be visited. Both conjuncts (2) and (3) use nominals to refer to state names in syntax, and are not expressible in $\ATL$. 
Given that (as we will show later) the valuation of propositions in dynamic systems may change over time, as well as the transitions affecting reachability of some states, we believe that 
$\mathsf{HATL}$ allows for a more fine-grained way to capture safety and liveness. 

\begin{definition}
Let $\mathsf{L}_1$ and $\mathsf{L}_2$ be two languages, and let $\varphi \in \mathsf{L}_1$ and $\psi \in \mathsf{L}_2$. 
We say that $\varphi$ and $\psi$ are \emph{equivalent}, when for all CGMs $M_s$: $M_s \models \varphi$ if and only if $M_s \models \psi$.

If for every $\varphi \in \mathsf{L}_1$ there is an equivalent $\psi \in \mathsf{L}_2$, we write $\mathsf{L}_1 \preccurlyeq \mathsf{L}_2$ and say that $\mathsf{L}_2$ is \emph{at least as expressive as} $\mathsf{L}_1$. We write $\mathsf{L}_1 \prec \mathsf{L}_2$ iff $\mathsf{L}_1 \preccurlyeq \mathsf{L}_2$ and $\mathsf{L}_2 \not \preccurlyeq \mathsf{L}_1$, and we say that $\mathsf{L}_2$ is \emph{strictly more expressive than} $\mathsf{L}_1$. Finally, if $\mathsf{L}_1 \preccurlyeq \mathsf{L}_2$ and $\mathsf{L}_2 \preccurlyeq \mathsf{L}_1$, we say that $\mathsf{L}_1$ and $\mathsf{L}_2$ are \emph{equally expressive} and write $\mathsf{L}_1 \approx \mathsf{L}_2$.
\end{definition}

\rust{Let us return to \Cref{fig::exampleCGM}, and assume that we have model $N'$, which is exactly like $N$ with the only difference that $N'_s \models q$.} Now, CGMs $M$ and $N'$ can be viewed as a \maks{disjoint union} \cite{thebluebible} $M \biguplus N'$ (modulo renaming states in $N'$ and assigning to them other nominals than $\alpha$ and $\beta$, \rust{like $\gamma$ and $\delta$ correspondingly})
of two isolated submodels $M$ and $N'$. 
Note that no $\ATL$ formula $\varphi$ that holds in $M$ depends on the submodel $N'$ as there are no transitions there. Hence, $\ATL$ cannot distinguish between $M$ and $M \biguplus N'$.
At the same time, \maks{in contrast to state labels, we can use nominals in the syntax. So, formulas of $\mathsf{HATL}$ containing $@_\alpha$ operators can access states in the submodel $N'$, and hence can have different truth values in $M$ and $M \biguplus N'$. An example of such a formula would be $@_\gamma q$, which is false everywhere in $M$ (since there is no state named $\gamma$), and true everywhere in $M \biguplus N'$.} This trivially implies that $\mathsf{HATL}$ \textit{is strictly more expressive than} $\mathsf{ATL}$ ($\mathsf{ATL}\prec \mathsf{HATL}$).

\subsection{Logic for $\ATL$ Model Building}

Now we turn to the full logic with dynamic update operators. 



\begin{definition}[Syntax of $\LAMB$]
The language $\LAMB$ of \emph{Logic for ATL Model Building} is defined 
as follows 
\begin{alignat*}{4}
        &\LAMB &&\thinspace \ni && \enspace \varphi ::= &&p \mid \alpha \mid @_\alpha \varphi \mid \neg \varphi \mid \varphi \land \varphi \mid \llb C \rrb \X \varphi \mid \\
        & && && &&   \mid \llb C \rrb \varphi \U \varphi \mid \llb C \rrb \varphi \mathsf{R} \varphi \mid [\pi]\varphi\\
        & &&\thinspace  && \enspace \pi ::= &&(p_\alpha:=\psi) \mid \alpha \xrightarrow{A} \alpha \mid \lcirc{\alpha}  
\end{alignat*}
where $p \in \AP$, $\alpha \in \Nom$, $C \subseteq \AG$, $A \in \Act^{\AG}$. 
We will also write $[\pi_1; ...; \pi_n]\varphi$ for $[\pi_1]...[\pi_n]\varphi$.  
Two important fragments of $\LAMB$ that we will also consider are the one without constructs $\alpha \xrightarrow{A} \alpha$ and $\lcirc{\alpha}$, called \emph{substitution} $\LAMB$ ($\mathsf{S}\LAMB$); and the one without $p_\alpha:=\psi$  and $\lcirc{\alpha}$, called \emph{arrow} $\LAMB$ ($\mathsf{A}\LAMB$).
\end{definition}

\begin{definition}[Semantics of $\LAMB$]
\label{def:semLAMB}
Let $M = \lb S, \tau, L\rb$ be a CGM, $s\in S$, $p \in \AP$, $\alpha \in \Nom$, and $\varphi, \psi \in \LAMB$. The semantics of $\LAMB$  is defined as in Definition \ref{def:semantics_static} with the following additional cases: 
    \begin{alignat*}{3}
        &M_s \models [p_\alpha:=\psi]\varphi && \quad \text{ iff } \quad && M^{p_\alpha:=\psi}_s \models \varphi\\
        &M_s \models [\lcirc{\alpha}]\varphi && \quad \text{ iff } \quad && M^{\lcirc{\alpha}}_s \models \varphi\\
        &M_s \models [\alpha \xrightarrow{A} \beta]\varphi && \quad \text{ iff } \quad && M^{\alpha \xrightarrow{A} \beta}_s \models \varphi
\end{alignat*} 
where $M^\pi$ with $\pi \in \{p_\alpha:=\psi, \alpha \xrightarrow{A} \beta, \lcirc{\alpha}\}$ is called an \emph{updated} CGM, and is defined as follows:
\begin{itemize}
    \item $M^{p_\alpha:=\psi}_s = \lb S, \tau, L^{p_\alpha:=\psi}\rb$, where, if $\exists t \in S$ such that $L(\alpha) = \{t\}$, then  
    \[ L^{p_\alpha:=\psi } (p) =
    \begin{cases}

        L(p) \cup \{t\} &\text{if } 
        M_s \models \psi,\\
        L(p) \setminus \{t\} &\text{if } 
        M_s \not \models \psi,\\

    \end{cases}\]
    and $M^{p_\alpha:=\psi} = \lb S, \tau, L\rb$ otherwise.
    \item  $M^{\alpha \xrightarrow{A} \beta} =  \lb S, \tau^{\alpha \xrightarrow{A} \beta}, L\rb$, where, if $\exists s,t \in S$ such that $L(\alpha) = s$ and $L(\beta) = t$, then
    \[\tau^{\alpha \xrightarrow{A} \beta}(s',A') =
        \begin{cases}
            t, &\text{if } s' = s \text{ and } A' = A\\
            \tau(s',A') &\text{otherwise},\\
        \end{cases}
    \]
    and  $ M^{\alpha \xrightarrow{A} \beta} =  \lb S, \tau, L\rb$ otherwise.
    \item $M^{\lcirc{\alpha}} = \lb S^{\lcirc{\alpha}}, \tau^{\lcirc{\alpha}}, L^{\lcirc{\alpha}}\rb$, where, if $L(\alpha) = \emptyset$, then  $S^{\lcirc{\alpha}} = S \cup \{t\}$ with  $t \not \in S$, $\tau^{\lcirc{\alpha}} = \tau \cup \{(t, A, t) \mid \forall A \in \Act^\AG\}$, and $L^{\lcirc{\alpha}}(\alpha) = L \cup \{(\alpha, \{t\})\}$. If $L(\alpha) \neq \emptyset$, then $M^{\lcirc{\alpha}} = \lb S, \tau, L\rb$. 
\end{itemize}
\end{definition}

Intuitively, updating a given model with $p_\alpha := \psi$ assigns to propositional variable $p$ in the state named $\alpha$ (if there is such a state) the truth value of $\psi$ in the state of evaluation. Observe, however, that due to the presence of operators $@_\alpha$, we can also let the truth-value of $p$ be dependent on the truth of $\psi$ in any other state. Updating with $\alpha \xrightarrow{A} \beta$ results in redirecting the $A$-labelled arrow that starts in state named $\alpha$ from some state $\tau(L(\alpha), A)$ to state named $\beta$ (if states named $\alpha$ and $\beta$ exist). Finally, operator $\lcirc{\alpha}$ adds a new state to the model and gives it name $\alpha$. All propositional variables are false in the new state and all transitions are self-loops. The self-loops and the fact that propositional variables are false in the new state is a design choice that achieves two goals. First, adding new state results in a finite model (recall that to determine the size of a model we count all true propositions), and, second, we deem this to be the smallest meaningful change that is in line with the idea of modularity. All valuations and self-loops can be then further modified by the corresponding substitutions and arrow operators.   

\begin{example}
\label{ex:expressive}
    Consider CGMs $M$ and $N$ from Figure \ref{fig::exampleCGM} and an update $\alpha \xrightarrow{aa} \alpha$. Observe that such an update leaves $N$ intact as there is already a self-loop labelled $aa$ in the $\alpha$-state. Model $M$, on the other hand, is changed and the resulting updated model $M^{\alpha \xrightarrow{aa} \alpha}$  is depicted in Figure \ref{fig::updateExample}. 

    \begin{figure}[h!]
\centering
\scalebox{0.75}{
\begin{tikzpicture}
\node(-1) at (2,0) {$M^{\alpha \xrightarrow{aa} \alpha}$};
\node[circle,draw=black, minimum size=4pt,inner sep=0pt, fill = black, label=below:{$s:\alpha$}](1) at (0,0) {};
\node[circle,draw=black, minimum size=4pt,inner sep=0pt, , label=below:{$t:\beta$}](2) at (4,0) {};

\draw[->,thick] (1) to [loop above] node[above, align=left] {$ab, ba, aa$} (1);
\draw [->,thick](1) to [bend right] node[below,align=left] {$bb$} (2);
\draw [->,thick] (2) to [bend right] node[above,align=left] {$aa, bb$} (1);
\draw [->,thick] (2) to [loop above] node[above,align=left] {$ab, ba$} (2);

\node(-1) at (2,-3.3) {$M^{\pi_1}$};
\node[circle,draw=black, minimum size=4pt,inner sep=0pt, fill = black, label=below:{$s:\alpha$}](1) at (0,-3.3) {};
\node[circle,draw=black, minimum size=4pt,inner sep=0pt, , label=below:{$t:\beta$}](2) at (4,-3.3) {};

\node[rectangle,draw=black, minimum size=4pt,inner sep=0pt, fill = black, label=below:{$u:\gamma$}](3) at (0,-5.3) {};

\draw [->,thick] (1) to [loop above] node[above, align=left] {$ab, ba$} (1);
\draw [->,thick](1) to [bend right] node[below,align=left] {$aa,bb$} (2);
\draw [->,thick] (2) to [bend right] node[above,align=left] {$aa,bb$} (1);
\draw [->,thick] (3) to [loop above] node[above,align=left] {$aa,ab$\\$ba,bb$} (3);

\end{tikzpicture}
\begin{tikzpicture}
\node(-1) at (2,0) {$M^{\pi_2}$};
\node[circle,draw=black, minimum size=4pt,inner sep=0pt, fill = black, label=above:{$s:\alpha$}](1) at (0,0) {};
\node[circle,draw=black, minimum size=4pt,inner sep=0pt, , label=above:{$t:\beta$}](2) at (4,0) {};

\node[rectangle,draw=black, minimum size=4pt,inner sep=0pt, fill = black, label=below:{$u:\gamma$}](3) at (0,-2) {};

\draw [->,thick](1) to [bend right] node[below,align=left] {$aa, bb$} (2);
\draw [->,thick] (2) to [bend right] node[above,align=left] {$aa, bb$} (1);
\draw [->,thick](1) to [bend right] node[left,align=left] {$ab$\\$ba$} (3);
\draw [->,thick] (3) to [loop above] node[above right,align=left] {$ab$\\$ba$} (3);
\draw [->,thick](3) to[bend right] node[below,align=left, near end] {$aa$\\$bb$} (2);

\node(-1) at (2,-3.3) {$M^{\mathit{SN}}$};
\node[circle,draw=black, minimum size=4pt,inner sep=0pt, fill = black, label=above:{$s:\alpha$}](1) at (0,-3.3) {};
\node[circle,draw=black, minimum size=4pt,inner sep=0pt, , label=above:{$t:\beta$}](2) at (4,-3.3) {};

\node[rectangle,draw=black, minimum size=4pt,inner sep=0pt, fill = black, label=below:{$u:\gamma$}](3) at (0,-5.3) {};

\node[rectangle,draw=black, minimum size=4pt,inner sep=0pt, label=below:{$v:\delta$}](4) at (4,-5.3) {};

\draw [->,thick](1) to [bend right] node[below,align=left] {$aa,bb$} (2);
\draw [->,thick] (2) to [bend right] node[above,align=left] {$aa,bb$} (1);
\draw [->,thick](1) to [bend right] node[left,align=left] {$ab$\\$ba$} (3);
\draw [->,thick] (3) to [loop above] node[above,align=left] {$ab$\\$ba$} (3);
\draw [->,thick](2) to [bend left] node[right,align=left] {$ab$\\$ba$} (4);
\draw [->,thick] (4) to [loop above] node[above,align=left] {$ab$\\$ba$} (4);
\draw [->,thick](3) to[bend right] node[below,align=left, near start] {$aa$\\$bb$} (2);
\draw [->,thick](4) to[bend left] node[below,align=left, near start] {$aa$\\$bb$} (1);
\end{tikzpicture}

}
\caption{Updated CGMs $M^{\alpha \xrightarrow{aa} \alpha}$ (top left), \rust{$M^{\pi_1
}$ (bottom left), $M^{\pi_2}$ (top right),} and $M^{\mathit{SN}}$ (bottom right). Proposition $p$ is true in black states, and $\mathit{fine}$ is true in square states.}
\Description{Updated CGMs $M^{\alpha \xrightarrow{aa} \alpha}$ (top left), \rust{$M^{\pi_1
}$ (bottom left), $M^{\pi_2}$ (top right),} and $M^{\mathit{SN}}$ (bottom right). Proposition $p$ is true in black states, and $\mathit{fine}$ is true in square states.}
\label{fig::updateExample}
\end{figure}
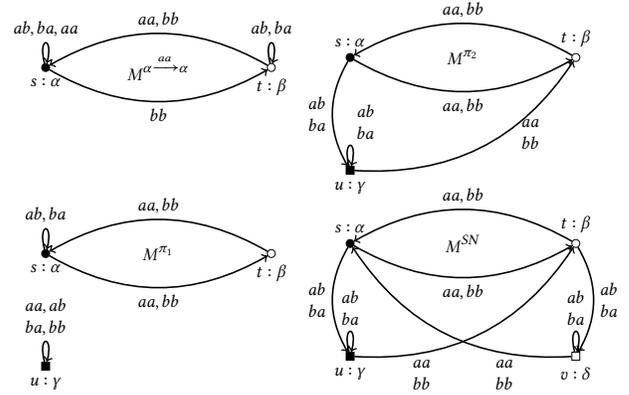 

Now it is easy to see, for example, that in the updated model the first agent can force the system to stay in the $\alpha$-state by choosing action $a$, i.e. $M_s \models [\alpha \xrightarrow{aa} \alpha] \llb \{1\} \rrb \X \alpha$, while it is not the case for $N^{\alpha \xrightarrow{aa} \alpha}$ (which is the same as $N$), i.e.  $N_s \not \models [\alpha \xrightarrow{aa} \alpha] \llb \{1\} \rrb \X \alpha$. 

\rust{
As a more complex update, consider $\pi_1:= \lcirc{\gamma}; p_\gamma:=\top; \textit{f}ine_\gamma:=\top$ and the resulting model $M^{\pi_1}$, where we have a new state named $\gamma$ that satisfies propositions $p$ and $\textit{f}ine$ (the intuition behind $\textit{f}ine$ is given in Section \ref{sec:norms}). Further, we can add redirection of some edges $\pi_2:= \pi_1; \alpha \xrightarrow{ab} \gamma; \alpha \xrightarrow{ba} \gamma; \gamma \xrightarrow{aa} \beta; \gamma \xrightarrow{bb} \beta$ to obtain model $M^{\pi_2}$.
The final complex update of $M$ with $\mathit{SN}$, also presented in Figure \ref{fig::updateExample}, is discussed in the next section in the context of normative updates.
}

\end{example}


\section{Dynamic MAS through the lens of $\LAMB$}

In this section, we show how our dynamic approach to CGMs allows us to capture various ideas and intuitions from the MAS research. 







\subsection{Normative Updates on CGMs} 
\label{sec:norms}

Updates of CGMs play a crucial role in the area of normative MAS (see \cite{alechina18} for an overview). In the context of $\ATL$, the general idea is to divide actions of agent's into those that are permitted to be executed in a current state, and into those that are not. We argue that $\LAMB$, thanks to its dynamic features, captures various intuitions and technicalities of reasoning about norms in MAS.

As a use case of $\LAMB$, we consider \textit{norm-based updates} as presented in \cite{bulling2016norm}.\footnote{We simplify the exposition for the sake of clarity.} The authors distinguish two types of norms for CGMs: regimenting norms and sanctioning norms. 
Regimenting norms prohibit certain transitions by looping them to the origin state. This corresponds to the intuition that a selected action profile has no noticeable effect and thus the system stays in the same state. 
Note that agents in this case always have available actions. This  assumption is sometimes called \textit{reasonableness} 
\cite{agotnes_07,hoek_07,agotnes10,agotnes10b,alechina13,galimullin2024synthesizing}. 

While it is quite straightforward to model regimenting norms in $\LAMB$ using arrow updates, for the sake of an example we focus on more general and subtle sanctioning norms. Such norms put sanctions, or fines, on certain action profiles without explicitly prohibiting them. A bit more formally, a sanctioning norm is a triple $(\varphi, \mathcal{A}, \mathcal{S})$, where $\varphi$ is a formula of a logic, $\mathcal{A}$ is a set of action profiles, and  $\mathcal{S}$ is a set of sanctions, which is a subset of the set of propositional variables. Intuitively, sanctioning norm $(\varphi, \mathcal{A}, \mathcal{S})$ states that performing action profiles from $\mathcal{A}$ from states satisfying $\varphi$ imposes sanctions from $\mathcal{S}$. 


The result of updating a CGM $M$ with a norm $\mathsf{SN} = (\varphi, \mathcal{A}, \mathcal{S})$ is a CGM $M^\mathsf{SN}$ such that for all $\mathcal{A}$-transitions from $\varphi$-states, we create copies of those states and make sanctioning atoms $\mathcal{S}$ true in those copies. Non-sanctioned transitions from the copy-states have the same outcome as the original transitions in the initial model. 

Consider as an example the sanctioning norm $\mathsf{SN} = (\top, \{ab,ba\}, \allowbreak \{\mathit{fine}\})$ and CGM $M$ from Example \ref{fig::exampleCGM}. Recall that in $M$, agents can switch the current state only if they cooperate (i.e. choose the same actions). Hence, norm $\mathsf{SN}$ penalises non-cooperative behaviour with the sanction $\mathit{fine}$. 
We can directly model the implementation of $\mathsf{SN}$ on $M$ in $\LAMB$ by translating $\mathsf{SN}$ into a complex update
\begin{alignat*}{1}
    &\mathit{SN}=\lcirc{\gamma}; p_\gamma:=\top; \mathit{fine}_\gamma:= \top; \alpha \xrightarrow{ab} \gamma; \alpha \xrightarrow{ba} \gamma; \gamma \xrightarrow{aa} \beta;\\
    &\gamma \xrightarrow{bb} \beta; \lcirc{\delta}; \mathit{fine}_\delta:= \top; \beta \xrightarrow{ab} \delta; \beta \xrightarrow{ba} \delta; \delta \xrightarrow{aa} \alpha;\delta \xrightarrow{bb} \alpha.
\end{alignat*}
The result of updating $M$ with $\mathit{SN}$ \rust{as well as some intermediate updates} is presented in Figure \ref{fig::updateExample}. In $M^{\mathit{SN}}$, we create copies of states $s$ and $t$ (both satisfying $\top$), named $u$ and $v$, which now also satisfy sanction $\mathit{fine}$. Then, undesirable action profiles $\{ab,ba\}$ originating in $s$ and $t$ lead to the sanctioned states $u$ and $v$. At the same time, action profiles not from $\{ab,ba\}$ behave similarly to the initial model. It is easy to see that our $\LAMB$ update $\mathit{SN}$ on $M$ captures the effect of $\mathsf{SN}$ on $M$. 



\subsection{Synthesis} 
\label{sec:synth}
\rust{Assume that you have a model $M_s$ of a MAS that does not satisfy some safety requirement $\varphi$. One way to go about it would be to create a completely new model from scratch, or to try to manually fix the existing one. However, in case of large models and complex safety requirements, both options may not be feasible.
This is exactly where the problem of \textit{synthesis} (or \textit{modification synthesis}) comes in. 
The full language of $\LAMB$ is expressive enough to capture the (bounded) synthesis problem from a given specification and starting model. In such a way, the modifications 
of the model are presented as updates operators of $\LAMB$. }

\rust{
\begin{definition}
    For a given CGM $M_s$, formula $\varphi$, and natural number $n$, the \emph{bounded (modification) synthesis existence problem} decides whether there is an update $\pi_\varphi:= [\pi_1, ..., \pi_k]$, with the size at most $n$, such that $M_s \models [\pi_\varphi] \varphi$. 
\end{definition}

The bounded synthesis existence problem is important, as it tackles the synthesis of \textit{compact} modifications. Indeed, oftentimes designers of (a model of) MAS are interested in a modification that achieves the goal without significantly altering the initial model. Think of an example of a software update, where we would like to extend the functionality of the software without rewriting most of its code. Moreover, the bounded 
synthesis 
problem is also important in  cases where changes in the given model 
are costly, and we want to avoid wasting resources as much as possible. 

In Section \ref{sec:mc}, we show that the complexity of the model checking problem for $\LAMB$ is \Ptime-complete. With this in mind, it is easy to see that the complexity of the bounded synthesis existence problem is \NP-complete. 

\begin{proposition}
The bounded synthesis problem for $\LAMB$ is \NP-complete.
\end{proposition}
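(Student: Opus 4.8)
The plan is to establish membership in \NP\ and \NP-hardness separately; membership will follow almost immediately from the \Ptime\ model-checking procedure for $\LAMB$ established in \Cref{sec:mc}. For membership, I take the certificate to be the update $\pi_\varphi = \pi_1; \dots; \pi_k$ itself. Any fresh nominal or proposition occurring in $\pi_\varphi$ but in neither $M$ nor $\varphi$ is interchangeable with any other fresh symbol (renaming such symbols consistently throughout $\pi_\varphi$ leaves the truth of $[\pi_\varphi]\varphi$ at $M_s$ unchanged), so without loss of generality $\pi_\varphi$ is built from the symbols of $M$ and $\varphi$ together with a reserved pool of $O(n)$ nominals and propositions; since $|\pi_\varphi| \le n$, this makes the certificate of polynomial bit-length (taking $n$ to be given in unary, as is standard for bounded-synthesis problems). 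To verify a guessed $\pi_\varphi$, observe that $[\pi_\varphi]\varphi$ is itself a $\LAMB$-formula of size at most $n + |\varphi| + O(1)$, so one simply runs the \Ptime\ algorithm of \Cref{sec:mc} on the instance $\bigl(M_s,\, [\pi_\varphi]\varphi\bigr)$ and accepts iff it accepts. This takes time polynomial in $|M| + |\varphi| + n$, so the problem is in \NP.

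For \NP-hardness I reduce from \textsc{Sat}. Given a propositional formula $\chi$ over $p_1, \dots, p_m$, build the single-state CGM $M = \lb \{s\}, \tau, L\rb$, where $\tau$ is the (forced) total self-loop on $s$, $L(\alpha) = \{s\}$ for a fixed nominal $\alpha$, and $L(p) = \emptyset$ for every $p \in \AP$; this is a properly named CGM, and $\chi$ is a well-formed $\LAMB$-formula. Setting $n$ to be the size of $\bigl[(p_1)_\alpha := \top; \dots; (p_m)_\alpha := \top\bigr]$, which is polynomial in $|\chi|$, the claim is that there exists an update $\pi$ with $|\pi| \le n$ and $M_s \models [\pi]\chi$ if and only if $\chi$ is satisfiable. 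The map $\chi \mapsto (M_s, \chi, n)$ is clearly computable in polynomial time, so this claim yields \NP-hardness (and note the reduction uses only substitution updates, so hardness already holds for the fragment $\mathsf{S}\LAMB$).

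To see the claim: if an assignment $v$ satisfies $\chi$, let $\pi$ concatenate $(p_i)_\alpha := \top$ over the indices $i$ with $v(p_i) = 1$; then $|\pi| \le n$, and in $M^\pi$ the state $s$ satisfies precisely the atoms $\{p_i : v(p_i) = 1\}$ (plus $\alpha$), whence $M^\pi_s \models \chi$, i.e.\ $M_s \models [\pi]\chi$. Conversely, every primitive update only adds states, so $s$ persists in $M^\pi$; if $M_s \models [\pi]\chi$ then $M^\pi_s \models \chi$, and since $\chi$ contains no nominals, no $@$-operators, no coalition modalities and no further updates, a trivial induction on $\chi$ shows that its truth at $M^\pi_s$ depends only on $\{p \in \AP \mid M^\pi_s \models p\}$, so that set is a satisfying assignment for $\chi$. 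The only step requiring genuine care is the membership direction --- justifying that the certificate can be kept of polynomial size despite the countably infinite signature and the state-creating update $\lcirc{\cdot}$, and fixing the convention that $n$ is presented in unary; granting the \Ptime\ model-checking theorem, everything else is routine.
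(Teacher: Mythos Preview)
Your proof is correct and takes essentially the same approach as the paper: guess-and-verify via the \Ptime\ model-checking algorithm for membership, and a reduction from \textsc{Sat} (the paper uses 3-\textsc{Sat}) over a single-state model with substitution updates for hardness. Your version is in fact more careful than the paper's compact argument --- you explicitly justify that the certificate has polynomial bit-length (handling fresh symbols and the unary encoding of $n$) and you argue the converse direction of the hardness reduction for \emph{arbitrary} updates $\pi$, whereas the paper only spells out the forward direction via substitutions.
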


\begin{proof}
    To see that the problem is in \NP, let $M_s$ be a CGM, $\varphi$ be a formula, and $n$ be a natural number. We guess a $\pi_\varphi$ of size at most $n$. It follows that $[\pi_\varphi] \varphi$ is of polynomial size 
    w.r.t $|\varphi| + n$. Then we can model check $M_s \models [\pi_\varphi] \varphi$ in polynomial time (\Cref{thm:MC}).

    For the \NP-hardness, we employ the reduction from 3-SAT problem. Let $\varphi:= \bigwedge_{1 \leqslant i \leqslant k} (\psi_{i,1} \lor \psi_{i,2} \lor \psi_{i,3})$, where $\psi$'s are literals, be an instance of 3-SAT, and let $P^\varphi = \{p^1, ..., p^m\}$ be the set of propositions appearing in $\varphi$. We construct model $M^\varphi$ over one agent consisting of a single state $s$ with the name $\alpha$, and such that $V^\varphi (p^i) = \emptyset$ for all $p^i \in P^\varphi$. All transitions for the agent are self-loops. Now it is easy to see that $\varphi$ is satisfiable iff there is a $\pi_\varphi$ consisting only of substitutions $p^i_\alpha:= \top$, and thus of the size linear in $|P^\varphi|$, such that $M_s^\varphi \models [\pi_\varphi]\varphi$. In other words, if there is an assignment that makes $\varphi$ true, we can explicitly simulate it in $M_s^\varphi$ using constructs $p^i_\alpha:= \top$ for variables $p^i \in P^\varphi$ that should be assigned `true'. 
\end{proof}


The fact that $\LAMB$ captures the modification synthesis 
is significant, because a \textit{constructive} solution to the problem would produce a step-by-step recipe, or an instruction, of how to modify a given model to make it satisfy some desirable property $\varphi$. 
In this section we studied the computational complexity of checking \textit{whether} a required modification of certain size exists. In future work, we will focus on constructive solutions to the problem, i.e. providing algorithms that automatically construct the required $\LAMB$ update. }

\subsection{Mechanisms for Social Choice }

\newcommand{\egdef}{:=}


The key advantage of synthesising mechanisms from logical specifications is that, as a declarative approach, the designer is not required to construct a complete solution for the problem of interest; instead, she can describe the desired mechanism in terms of its rules and desirable properties. Some recent approaches use model checking and satisfiability procedures for Strategy Logic \cite{SLKF_KR21,MittelmannMMP22}. We, however, focus on the dynamic $\LAMB$ approach. 

Let us assume a mechanism encoded as a CGM, and a finite set of alternatives $Alt$. Such a mechanism may represent, for instance, a single-winner election or a resource allocation protocol.
In such CGM, 
we let the atomic  proposition $\textit{pref(i,j)}_a$ denote that agent $a$ prefers the alternative $i$ to $j$ (e.g., she prefers that $i$ is elected over $j$).
We also let $\textit{dislike}_a(i)$ indicate that $i$ is \textit{disliked} by agent $a$, 
and $chosen_i$ denote that the alternative $i$ was chosen. 

We now illustrate how to capture two classic concepts from game theory: individual rationality and Pareto optimality.  
Individual rationality expresses the idea that each agent can ensure nonnegative utility \cite{Nisan2007}. In our setting, this can be seen as avoiding disliked candidates and expressed with the following formula  
$$ \bigwedge_{a \in Agt, i \in Alt} \llb \{a\} \rrb \G \neg ( \textit{chosen}_i \land \textit{disliked}_a(i))$$
which states that each agent has a strategy to enforce that none of the disliked alternatives are ever chosen. 

A mechanism is  Pareto optimal if any change of outcome that is beneficial to one agent is detrimental to at least one of the other agents.
The formula
$$pareto_i \egdef  \bigwedge_{a \in Agt,  j \in Alt\setminus\{i\}}\big(
( \textit{pref}_a(j,i)
 \rightarrow 
\bigvee_{b \in Agt} \textit{pref}_b(i,j)
   \big)$$
expresses that  an alternative $i$ is Pareto optimal  whenever for any other alternative $j$, if $j$ is preferred to an agent $a$, then there is an agent $b$ that prefers $i$ over $j$. 

We can then express that there is a strategy for the coalition $C$ to ensure that any chosen alternative is Pareto optimal, with the  formula $ \varphi_{po}:=  \llb C \rrb  \G \bigwedge_{i \in Alt} chosen_i \to pareto_i
$. 
If we let $C = \emptyset$, this formula requires choices to be Pareto optimal for any possible behavior of the agents.

Once desirable mechanism properties are expressed as $\ATL$ formulas, one can use $\LAMB$ to verify whether CGM $M_s$ has such property or would have it in case a sequence of modifications $\pi_1; ..., \pi_n$ was performed, i.e. whether $M_s \models [\pi_1; ...; \pi_n] \varphi_{po}$ for Pareto optimality. Further, 
\rust{
employing the ideas of synthesis, one could automatically obtain the required modifications. 
}

\section{Expressivity}

In this section, we compare the expressive power of $\LAMB$ and its fragments. In particular, we show that, interestingly enough, substitutions on their own do not add expressive power compared to the base $\HATL$ (Theorem \ref{thm:hatlEQslamb}). The ability to move arrows, on the other hand, leads to an increase in expressivity (Theorem \ref{thm:hatlVSalamb}). 


First, we show that $\mathsf{SLAMB}$ and $\mathsf{HATL}$ are equally expressive. This is quite intriguing since it means that adding substitutions to the base logic $\mathsf{HATL}$ does not allow us to express anything that we could not express in $\mathsf{HATL}$.
We prove this 
result by providing a truth-preserving translation from formulas of $\mathsf{S}\LAMB$ into formulas of $\mathsf{H}\ATL$. \textcolor{blue}{The expressivity results that use translation schemas (usually based on \textit{reduction axioms}) are quite ubiquitous in $\mathsf{DEL}$ (see \cite{hvdetal.del:2007} for an overview, and 
\cite{kooi07,kuijer14} for substitution specific translations).}


\begin{theorem}
\label{thm:hatlEQslamb}
$\mathsf{HATL}\approx \mathsf{SLAMB}$
\end{theorem}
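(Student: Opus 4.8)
The plan is to prove the two inclusions $\mathsf{HATL} \preccurlyeq \mathsf{SLAMB}$ and $\mathsf{SLAMB} \preccurlyeq \mathsf{HATL}$. The first is immediate, since every $\mathsf{HATL}$ formula is syntactically an $\mathsf{SLAMB}$ formula. For the second I would give a translation that eliminates substitution operators one occurrence at a time. The naive ``reduction axiom'' strategy fails precisely at the temporal modalities: $[p_\alpha := \psi]\llb C \rrb \X \chi$ is \emph{not} equivalent to $\llb C \rrb \X [p_\alpha := \psi]\chi$, because on the right the side condition $\psi$ gets re-evaluated at the \emph{successor} state instead of at the state where the update was issued — this is exactly the locality phenomenon that the previous (global-substitution) proof missed. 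The repair is to factor an arbitrary local substitution through the two \emph{constant} substitutions $p_\alpha := \top$ and $p_\alpha := \bot$, which \emph{do} commute with every connective since the model $M^{p_\alpha := \bullet}$ they produce is independent of the evaluation point.

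Concretely, I would first record the \emph{splitting equivalence}: for all $\psi,\chi$,
\[ [p_\alpha := \psi]\chi \;\equiv\; (\psi \wedge [p_\alpha := \top]\chi) \;\vee\; (\neg\psi \wedge [p_\alpha := \bot]\chi), \]
which is immediate from \Cref{def:semLAMB}, since $M_s \models \psi$ forces $M^{p_\alpha := \psi} = M^{p_\alpha := \top}$ and $M_s \not\models \psi$ forces $M^{p_\alpha := \psi} = M^{p_\alpha := \bot}$ (both clauses also covering the degenerate case $L(\alpha) = \emptyset$, where all three updates are the identity and $\psi$ is still evaluated in the original $M_s$). Next I would list the \emph{reduction axioms for constant substitutions}, $\bullet \in \{\top,\bot\}$: at atoms, $[p_\alpha := \top]p \equiv p \vee \alpha$, $[p_\alpha := \bot]p \equiv p \wedge \neg\alpha$, and $[p_\alpha := \bullet]q \equiv q$ for every $q \in \AP \cup \Nom$ with $q \neq p$; and commutation with all remaining connectives, i.e. $[p_\alpha := \bullet]\neg\chi \equiv \neg[p_\alpha := \bullet]\chi$, $[p_\alpha := \bullet](\chi_1 \wedge \chi_2) \equiv [p_\alpha := \bullet]\chi_1 \wedge [p_\alpha := \bullet]\chi_2$, $[p_\alpha := \bullet]@_\beta\chi \equiv @_\beta[p_\alpha := \bullet]\chi$, $[p_\alpha := \bullet]\llb C \rrb \X\chi \equiv \llb C \rrb \X[p_\alpha := \bullet]\chi$, $[p_\alpha := \bullet]\llb C \rrb \chi_1 \U \chi_2 \equiv \llb C \rrb ([p_\alpha := \bullet]\chi_1)\,\U\,([p_\alpha := \bullet]\chi_2)$, and likewise for $\mathsf{R}$. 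The temporal cases are valid because the update changes neither $S$, nor $\tau$, nor the valuation of nominals, so the sets of plays $\Lambda^s_{\sigma_C}$ and all nominal denotations coincide in $M$ and $M^{p_\alpha := \bullet}$, and $M^{p_\alpha := \bullet}$ is literally the same model no matter at which state one computes it; the atom axioms are checked directly, with the case $L(\alpha) = \emptyset$ handled by the fact that $\alpha$ is then false everywhere.

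From these ingredients I would prove the core lemma: if $\psi,\chi \in \mathsf{HATL}$ then $[p_\alpha := \psi]\chi$ has an equivalent in $\mathsf{HATL}$. Apply the splitting equivalence; then $[p_\alpha := \top]\chi$ and $[p_\alpha := \bot]\chi$ are constant substitutions applied to substitution-free formulas, so iterating the reduction axioms (induction on the structure of $\chi$) pushes the operators down to the atoms and removes them, yielding $\mathsf{HATL}$ formulas, which combine with $\psi$ through $\wedge,\vee,\neg$. The theorem then follows by induction on the number of substitution operators in an $\mathsf{SLAMB}$ formula $\varphi$: if there are none, $\varphi \in \mathsf{HATL}$; otherwise pick an innermost occurrence $[p_\alpha := \psi]\chi$ (so that no substitution operator occurs inside $\psi$ or $\chi$, hence $\psi,\chi \in \mathsf{HATL}$), replace it with the equivalent $\mathsf{HATL}$ formula from the core lemma, and invoke the routine replacement property of $\mathsf{SLAMB}$ (equivalent subformulas may be swapped, proved by a straightforward induction in which the substitution-operator case uses that the update depends only on the side formula and that equivalent side formulas induce the same updated model) to conclude that the whole formula is preserved up to equivalence while its substitution count has dropped by one. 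The main obstacle is the conceptual one already indicated — identifying that locality breaks the standard reduction schema at the temporal modalities and that routing through constant substitutions fixes it — together with the bookkeeping that guarantees termination of the constant-substitution elimination on substitution-free formulas.
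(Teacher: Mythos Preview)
Your proof is correct. Both your argument and the paper's corrected proof address the same locality obstruction---that the side condition $\psi$ in $[p_\alpha := \psi]$ must be evaluated at the \emph{issuing} state, not at successors---but they resolve it through different decompositions. The paper case-splits on whether the update actually changes anything: first on whether $\alpha$ names a state ($@_\alpha\top$), then on whether $\psi \leftrightarrow @_\alpha p$ (no change) or its negation (the value flips), and in the flip case applies a single syntactic substitution $\varphi[p/\,(\alpha \to \lnot p)\land(\lnot\alpha \to p)]$ throughout the $\mathsf{HATL}$ body. You instead split directly on the truth of $\psi$ at the evaluation state and factor each branch through a \emph{constant} substitution $[p_\alpha := \top]$ or $[p_\alpha := \bot]$, for which genuine reduction axioms commuting with every connective---temporal operators included---are available; your atom axioms then amount to the syntactic substitutions $\chi[p/\,p\lor\alpha]$ and $\chi[p/\,p\land\lnot\alpha]$.

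What your route buys is modularity and transparency: it isolates precisely why locality ceases to be an obstacle (a constant side formula yields the same updated model from every evaluation point), and it stays entirely within the familiar $\mathsf{DEL}$ reduction-axiom methodology. The paper's route is more compact---one case analysis and one syntactic substitution---and avoids the intermediate layer of constant-substitution axioms, at the cost of leaving the correctness of the substitution under temporal operators implicit (it relies tacitly on a change-of-valuation lemma for $\mathsf{HATL}$ that your reduction axioms make explicit). Both translations incur the same exponential blow-up.
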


\begin{proof}

{\color{blue}First, we show how to `translate away' a substitution that has in its scope only formulas of $\mathsf{HATL}$. Specifically, let $M_s$ be a model, and $\varphi \in \mathsf{HATL}$. Then, $M_s \models [p_\alpha := \psi]\varphi$ if and only if $M_s \models (\lnot @_\alpha \top \rightarrow \varphi) \land (@_\alpha \top \rightarrow ((\psi \leftrightarrow @_\alpha p) \to \varphi) \land (\lnot(\psi \leftrightarrow @_\alpha p) \to \varphi[p / \alpha \to \lnot p \land \lnot \alpha \to p]))$.

Recall, that by the definition of semantics, $M_s^{p_\alpha := \psi}$ is similar to $M_s$ with the only difference that if $\exists t \in S: t = L(\alpha)$, then we consider two options. If $M_s \models \psi$, then $L^{p_\alpha := \psi} (p) = L(p) \cup \{t\}$, and if $M_s \models \lnot \psi$, then $L^{p_\alpha := \psi} (p) = L(p) \setminus \{t\}$. Otherwise, $M_s^{p_\alpha := \psi} = M_s$.

The translation explicitly captures these semantic conditions. Hence, given an arbitrary $M_s$, consider $M_s \models [p_\alpha := \psi]\varphi$ with $\varphi \in \mathsf{HATL}$. If no state in $M$ is named $\alpha$, we evaluate $\varphi$ in $s$. This corresponds to the conjunct $\lnot @_\alpha \top \to \varphi$ in the translation.

    Now, assume that there is a state named $\alpha$ in the model, i.e. that $M_s \models @_\alpha \top$. From the definition of the semantics of the substitution, we have that $L^{p_\alpha := \psi} (p)$ is updated iff $M_s \models \psi$ is not equivalent to $M_s \models @_\alpha p$.
    
    Assume that $M_s \models \psi$ is equivalent to $M_s \models @_\alpha p$. Then, by the definition of the semantics, $L^{p_\alpha := \psi} (p) = L(p)$, and therefore $M_s^{p_\alpha := \psi} = M_s$ and $\varphi$ is evaluated in $M_s$. This is equivalent to the conjunct $(\psi \leftrightarrow @_\alpha p) \to \varphi$ in the translation. 

    Finally, assume that $M_s \models \psi$ is not equivalent to $M_s \models @_\alpha p$ and hence $L^{p_\alpha := \psi} (p) \neq L(p)$. This means that whatever the truth-value of $p$ is, we need to flip it at state $\alpha$ and leave the same in other states. To this end, pick any $u \in S$, and we need to show that $M_u \models  \alpha \to \lnot p \land \lnot \alpha \to p $ iff $u \in L^{p_\alpha := \psi} (p)$. Assume that $u \in L^{p_\alpha := \psi} (p)$. If $\{u\} \neq L(\alpha)$, then trivially  $M_u \models \alpha \to \lnot p \land \lnot \alpha \to p$, i.e. the truth-value of $p$ remains the same. If $\{u\} = L(\alpha)$, then, by the semantics of the update modality and the fact that $M_s \models \psi$ is not equivalent to $M_s \models @_\alpha p$, we get that the truth-value of $p$ in $u$ was updated, i.e. $u \not \in L (p)$ and $M_s \models \psi$. Hence $M_u \models \alpha \to \lnot p$ and, therefore, $M_u \models\alpha \to \lnot p \land \lnot \alpha \to p$. Similar reasoning holds for the case of $u \not \in L^{p_\alpha := \psi} (p)$.

    Now, having the translation we apply it recursively from the inside-out, taking the innermost occurrence of a substitution modality and translating it away. The procedure terminates due to formulas of $\mathsf{SLAMB}$ being finite.
}
 \end{proof}

An observant reader may notice, that the presented 
translation of a $\mathsf{SLAMB}$ formula may result in an exponentially larger formula of $\mathsf{HATL}$. 
Or, equivalently, that the initial formula of $\mathsf{SLAMB}$ is exponentially \emph{more succinct} than its translation. Such a blow-up is quite natural for reduction-based translations in $\mathsf{DEL}$ 
\cite{lutz06,french13}. In the next section we will show that despite this, 
model checking $\mathsf{SLAMB}$ is \Ptime-complete.
 
Now we turn to 
$\mathsf{ALAMB}$ and show that the ability to move arrows grants us additional expressivity.

\begin{theorem}
\label{thm:hatlVSalamb}
$\mathsf{HATL} \prec \mathsf{ALAMB}$ and $\mathsf{SLAMB} \prec \mathsf{ALAMB}$.
\end{theorem}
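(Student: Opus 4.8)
The plan is to prove the two strictness claims together by exhibiting a single pair of CGMs that are $\mathsf{HATL}$-equivalent (hence $\mathsf{SLAMB}$-equivalent, by Theorem~\ref{thm:hatlEQslamb}) but can be distinguished by an $\mathsf{ALAMB}$ formula. Since $\mathsf{HATL} \preccurlyeq \mathsf{ALAMB}$ and $\mathsf{SLAMB} \preccurlyeq \mathsf{ALAMB}$ hold trivially (both are syntactic fragments of $\mathsf{LAMB}$, and $\mathsf{SLAMB} \approx \mathsf{HATL}$), the whole theorem reduces to proving $\mathsf{ALAMB} \not\preccurlyeq \mathsf{HATL}$, i.e. that some $\mathsf{ALAMB}$ formula has no $\mathsf{HATL}$-equivalent.

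First I would recall the standard tool for proving non-expressivity in $\ATL$-like logics: $\ATL$ (and $\mathsf{HATL}$) formulas are preserved under a suitable notion of bisimulation for concurrent game models — an \emph{alternating bisimulation} refined with a ``same nominals'' clause for the hybrid part (each bisimilar pair of states must satisfy exactly the same nominals, and a state named by some $\alpha$ must be matched by the unique state named $\alpha$ on the other side). So the strategy is: build two CGMs $M_s$ and $N_s$ together with a relation $Z$ witnessing that they are hybrid-alternating-bisimilar, conclude $M_s \equiv_{\mathsf{HATL}} N_s$ (hence also $\equiv_{\mathsf{SLAMB}} N_s$), and then exhibit an $\mathsf{ALAMB}$ formula $\chi$ with $M_s \models \chi$ and $N_s \not\models \chi$ (or vice versa). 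A natural candidate uses an arrow update to ``rewire'' a transition and then observe a strategic difference that only materializes after the rewiring: something like $\chi := [\alpha \xrightarrow{A} \beta] \llb C \rrb \X \gamma$. The models $M$ and $N$ should agree on all their $\ATL$/$\mathsf{HATL}$-visible structure and nominal labelling, but differ in which action profile leads where, so that redirecting the $A$-edge out of the $\alpha$-state produces genuinely different games — while before the update the two games are bisimilar. This is exactly in the spirit of Example~\ref{ex:expressive}, where $M_s \models [\alpha \xrightarrow{aa} \alpha]\llb\{1\}\rrb\X\alpha$ but $N_s \not\models$ it; the task is to upgrade that example so that $M_s$ and $N_s$ become $\mathsf{HATL}$-indistinguishable (e.g. by making the pre-update games bisimilar, which the raw $M$/$N$ of Figure~\ref{fig::exampleCGM} are not, since they already differ on $\llb\{1,2\}\rrb$-formulas — actually $M$ and $N$ there are isomorphic as unlabelled games but one must check the action-labels don't leak through bisimulation; more robustly, pick a fresh pair).

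The key steps in order: (1) state the hybrid-alternating-bisimulation notion and the preservation lemma $M_s \leftrightarroweq N_s \Rightarrow (M_s \models \varphi \iff N_s \models \varphi)$ for all $\varphi \in \mathsf{HATL}$, proved by a routine induction on $\varphi$ — the only non-standard cases are $\alpha$ and $@_\alpha\psi$, handled by the same-nominals clause; (2) define the two concrete CGMs $M$ and $N$ (small: two agents, two actions, a handful of states, carefully chosen nominal labelling so that the relevant states needed for the arrow update do carry names $\alpha,\beta,\gamma$); (3) exhibit the relation $Z$ and verify it is a hybrid-alternating bisimulation — this is the bulk of the routine work; (4) compute that the $\mathsf{ALAMB}$ formula $\chi$ separates $M_s$ from $N_s$ by explicitly describing the updated models $M^{\alpha\xrightarrow{A}\beta}$ and $N^{\alpha\xrightarrow{A}\beta}$ and checking the $\llb C\rrb\X$ condition in each; (5) conclude that no $\mathsf{HATL}$-formula is equivalent to $\chi$ (else $M_s$ and $N_s$ would disagree on it, contradicting step (1)), hence $\mathsf{HATL} \prec \mathsf{ALAMB}$; and since $\mathsf{SLAMB} \approx \mathsf{HATL}$ by Theorem~\ref{thm:hatlEQslamb}, also $\mathsf{SLAMB} \prec \mathsf{ALAMB}$.

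The main obstacle I anticipate is \textbf{step (3)} — getting the bisimulation exactly right while simultaneously satisfying the competing constraints that (a) the pre-update models are bisimilar, (b) the states targeted by the arrow update are named by nominals (so the update is actually applicable and non-trivial), and (c) the post-update models are \emph{not} bisimilar. Constraint (b) is delicate because bisimilar states can carry different nominals only if those nominals have empty extension on the other side — but here we want $\alpha,\beta,\gamma$ to be genuinely used on both sides, so the bisimulation must relate the $\alpha$-state of $M$ only to the $\alpha$-state of $N$, etc.; the freedom for the bisimulation to be non-trivial then lives entirely in the \emph{other}, possibly differently-named states. The cleanest way around this is probably to have $M$ and $N$ share the same carrier and nominal labelling and differ only in the transition function at one action profile in a way that is invisible to $\ATL$ (because some other action profile already reaches the same successor, so coalitional power is unchanged) but becomes visible once that redundant edge is redirected elsewhere. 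I would spend the design effort making the witnessing $Z$ as close to the identity as possible, so that verifying the bisimulation clauses is immediate, and isolate all the cleverness in the single rewired edge.
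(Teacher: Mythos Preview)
Your proposal is correct and follows essentially the same route as the paper: exhibit two pointed CGMs that are $\mathsf{HATL}$-indistinguishable but separated by an $\mathsf{ALAMB}$ arrow-update formula, then invoke Theorem~\ref{thm:hatlEQslamb} for the $\mathsf{SLAMB}$ claim. The paper in fact uses exactly the models $M_s$ and $N_s$ of Figure~\ref{fig::exampleCGM} and the formula $[\alpha \xrightarrow{aa} \alpha]\llb\{1\}\rrb\X\alpha$ from Example~\ref{ex:expressive}, so your worry that those models might already differ on $\mathsf{HATL}$ is unfounded --- your own parenthetical observation that they are isomorphic via an action relabeling (swap $a\leftrightarrow b$ for one agent, fixing all states, propositions, and nominals) is precisely the reason they are $\mathsf{HATL}$-equivalent, and no fresh pair is needed. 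One small slip: ``both are syntactic fragments of $\mathsf{LAMB}$'' does not by itself yield $\mathsf{SLAMB} \preccurlyeq \mathsf{ALAMB}$ (neither is a fragment of the other); the correct reason is the one you also give, namely $\mathsf{SLAMB} \approx \mathsf{HATL} \preccurlyeq \mathsf{ALAMB}$.
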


\begin{proof}
The fact that $\mathsf{HATL} \preccurlyeq \mathsf{ALAMB}$ follows trivially as  $\mathsf{HATL}$ is a fragment of $\mathsf{ALAMB}$.
To see that $\mathsf{ALAMB} \not \preccurlyeq \mathsf{HATL}$ recall Example \ref{ex:expressive}. 
It is easy to see that models $M_s$ and $N_s$ cannot be distinguished by any $\mathsf{HATL}$ formula\footnote{\rust{In fact, there is an alternating bisimulation \cite{agotnes07irr} between the models. The discussion of an appropriate notion of bisimulation for $\LAMB$ is outside of the scope of this paper and is left for future work.}}. Indeed, in both models states $s$ and $t$ agree on their corresponding nominals and propositional variables, and, moreover in all states none of the agents can force a transition on their own.
The fact that such a transition requires different 
action profiles by $\{1,2\}$ in different models cannot be captured by formulas of $\mathsf{HATL}$ as they do not have the access to particular actions, rather just to abilities of the agent.

At the same time, as shown in Example \ref{ex:expressive}, the $\mathsf{ALAMB}$ formula $[\alpha \xrightarrow{aa} \alpha] \llb \{1\} \rrb \X \alpha$ holds in $M_s$ and is false in $N_s$, and, therefore, $\mathsf{HATL} \prec \mathsf{ALAMB}$.
The fact that $\mathsf{SLAMB} \prec \mathsf{ALAMB}$ follows by transitivity of the expressivity relation from $\mathsf{HATL} \approx \mathsf{SLAMB}$.
\end{proof}

From Theorem \ref{thm:hatlVSalamb} it 
follows that $\mathsf{HATL} \prec \mathsf{LAMB}$ and $\mathsf{SLAMB} \prec \mathsf{LAMB}$. 
For future work, we leave open the question of whether $\LAMB$ is strictly more expressive than $\mathsf{ALAMB}$, and conjecture that it is indeed the case.
Figure \ref{fig:expressivity} summarises the expressivity results.

\begin{figure}[h!]
\centering
\begin{tikzpicture}[scale=0.8, transform shape]
\node (HATL) at (0,0) {$\mathsf{HATL}$};
\node (SLAMB) at (3,0) {$\mathsf{SLAMB}$};
\node (ALAMB) at (6,0) {$\mathsf{ALAMB}$};
\node (LAMB) at (9,0) {$\mathsf{LAMB}$};

\draw[thick,<->] (HATL) to node[sloped, anchor=center, above] {\footnotesize{Thm. \ref{thm:hatlEQslamb}}} (SLAMB);
\draw[thick,->] (SLAMB) to node[sloped, anchor=center, above] {\footnotesize{Thm. \ref{thm:hatlVSalamb}}} (ALAMB);
\draw[thick,->] (ALAMB) to [bend right] (LAMB);
\draw[thick,->,dashed] (LAMB) to [bend right] node[above] {?}  (ALAMB);

\end{tikzpicture}
\caption{Overview of the expressivity results. An arrow from $\mathsf{L}_1$ to $\mathsf{L}_2$ means $\mathsf{L}_1 \preccurlyeq\mathsf{L}_2$. If there is no symmetric arrow, then $\mathsf{L}_1 \prec \mathsf{L}_2$. This relation is transitive, and we omit transitive arrows in the figure. The dashed arrow with the question mark denotes the open question.}
\label{fig:expressivity}
\end{figure}
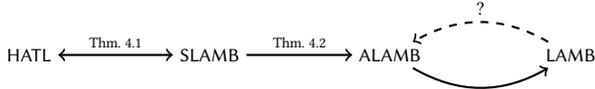

\section{Model checking} 
\label{sec:mc}

The \rust{model checking problem for $\LAMB$} consists in determining, for a CGM $M_s$ and a formula $\varphi$, whether $M_s \models \varphi$. We show 
that despite the increase in expressivity, the complexity of the model checking problem for the full language of $\LAMB$ is \rust{still} \Ptime-complete. 

The complexity of model checking $\ATL$ is known to be \Ptime-complete \cite{alur2002}, and it is easy to see that it remains the same also for $\mathsf{HATL}$. The algorithm for $\ATL$ uses function $Pre (M, C, Q)$ that computes for a given CGM $M$, coalition $C \subseteq Agt$ and a set $Q \subseteq S$, the set of states, from which coalition $C$ can force the outcome to be in one of the $Q$ states. Function $Pre$ can be computed in polynomial time. We can use exactly the same algorithm for computing $Pre$ as for the standard $\ATL$, however in our case we will compute $Pre$ for not only the original model model $M$, but its updated versions as well. 

\begin{breakablealgorithm}
	\caption{An algorithm for model checking $\LAMB$}\label{lambMC} 
	\small
	\begin{algorithmic}[1] 		
		\Procedure{MC}{$M, s, \varphi$}		
        \Case {$\varphi = \alpha$}
            \State{\textbf{return} $s \in L(\alpha)$}
        \EndCase
        \Case {$\varphi = @_\alpha \psi$}
        \If {$L(\alpha) \neq \emptyset$}
            \State{\textbf{return}  $\textsc{MC} (M, L(\alpha), \psi)$}
        \Else
            \State{\textbf{return}  \textit{false}}
        \EndIf
        \EndCase

    
\Case {$\varphi = [\pi] \psi$ with $\pi\in \{p_\alpha:=\psi, \alpha \xrightarrow{A} \beta, \lcirc{\alpha}\}$}
			\State{\textbf{return} $\textsc{MC} (\textsc{Update} (M, s, \pi), s, \psi)$}
		\EndCase
   \EndProcedure

	\end{algorithmic}
\end{breakablealgorithm}

The model checking algorithm for $\LAMB$ (Algorithm \ref{lambMC}) is similar to the one for $\ATL$ when it comes to temporal modalities and Boolean cases, and thus we omit them for brevity (see the full algorithm in the  Appendix). Apart from them, we have hybrid cases $\varphi = \alpha$ and $\varphi = @_\alpha \psi$, and the dynamic case $\varphi = [\pi] \psi$ with $\pi\in \{p_\alpha:=\psi, \alpha \xrightarrow{A} \beta, \lcirc{\alpha}\}$. Regarding the  $\varphi = @_\alpha \psi$, we evaluate $\psi$ at state named $\alpha$, if the state with such a name exists. If the denotation of name $\alpha$ is empty, then $@_\alpha \psi$ is false. 

The dynamic case $\varphi = [\pi] \psi$ is a bit more involved as the algorithm evaluates $\psi$ in a new updated model $M^\pi$. Procedure \textsc{Update} for constructing updated models is captured by Algorithm \ref{update}.

\begin{breakablealgorithm}
	\caption{An algorithm for computing updated models}\label{update} 
	\small
	\begin{algorithmic}[1] 	
    \Procedure{Update}{$M, s, \pi$}
   \Case{$\pi = p_\alpha:=\psi$}
    \If{$L(\alpha) \neq \emptyset$}
        \If{$\textsc{MC} (M,s, \psi)$}
            \State{$L^\pi (p) = L (p) \cup L(\alpha)$}
        \Else
            \State{$L^\pi (p) = L (p) \setminus L(\alpha)$}
        \EndIf
        \State{\textbf{return} $M^\pi = \langle S, \tau, L^\pi \rangle$}
    \Else
        \State{\textbf{return} $M$}
    \EndIf
   \EndCase
   
   \Case{$\pi = \alpha \xrightarrow{A} \beta$}
        \If{$L(\alpha) \neq \emptyset$ and $L(\beta) \neq \emptyset$}
            \State{$\tau^\pi = \tau \setminus \{(L(\alpha), A, \tau(L(\alpha), A))\} \cup \{(L(\alpha), A, L(\beta))\} $}
            \State{\textbf{return} $M^\pi = \langle S, \tau^\pi, L \rangle$}
        \Else
            \State{\textbf{return} $M$}
        \EndIf
   \EndCase

      \Case{$\pi = \lcirc{\alpha}$}
   \If{$L(\alpha) = \emptyset$}
        \State{$S^\pi = S \cup \{t\}$, where $t$ is fresh}
        \State{$\tau^\pi = \tau \cup \{(t, A, t) \mid A \in \Act^\AG\}$}
        \State{$L^\pi = L \cup \{(\alpha, \{t\})\}$}
        \State{\textbf{return} $M^\pi = \langle S^\pi, \tau^\pi, L^\pi \rangle$}
    \Else
        \State{\textbf{return} $M$}
   \EndIf
   \EndCase
   \EndProcedure
	\end{algorithmic}
\end{breakablealgorithm}

In the procedure, an updated model is constructed according to Definition \ref{def:semLAMB}. For the case of substitutions, we first check whether state named $\alpha$ exists, and if it does, we update the valuation function $L$ based on whether $M_s \models \psi$. For arrows $\alpha \xrightarrow{A} \beta$, if both states named $\alpha$ and $\beta$ exist, we substitute in $\tau$ transition $(L(\alpha), A,$ $\tau(L(\alpha), A))$ (i.e. transition from state named $\alpha$ via $A$ to whatever state is assigned according to $\tau(L(\alpha), A)$) by the required transition $(L(\alpha), A, L(\beta))$ (line 13). Finally, to add a new state with name $\alpha$, we first check whether the name is not used, and then extend $S$, $\tau$, and $L$ of the original model accordingly.

Model checking for $\LAMB$ is done then recursively by a combination of procedures \textsc{MC}, which decides whether a given formula is true in a given model, and \textsc{Update}, which computes required updated models. 
\textsc{MC} calls  \textsc{Update} when it needs to perform an update (line 10, Alg. \ref{lambMC}), and  \textsc{Update} calls \textsc{MC} when it needs to compute the valuation of $\psi$ for case $\pi = p_\alpha:=\psi$ (line 4, Alg. \ref{update}).

We run \textsc{MC}($N,t,\psi$) for at most $|\varphi|$ formulas $\psi$ and at most $|\varphi|$ models $N$. Each run, similarly to the algorithm for $\ATL$, is done in polynomial time with respect to $|M|$. 
Hence, procedure \textsc{MC} is used by the model checking algorithm for a polynomial amount of time.   

At the same time, we run \textsc{Update}($N, t, \pi$) for at most $|\varphi|$ models $N$ and at most $|\varphi|$ formulas $\psi$ (the substitution case). The sizes of updated models are bounded by $|\varphi|\cdot|M|$ (the case of adding a new state). Thus, each run of \textsc{Update} takes polynomial time, and hence we spend a polynomial amount of time in the procedure while performing model checking. 

Both procedures, \textsc{MC} and \textsc{Update}, take polynomial time to run, and, therefore, model checking for $\LAMB$ can be done in polynomial time. The lower bound follows straightforwardly from \Ptime-completeness of $\ATL$ model checking.


\begin{theorem}\label{thm:MC}
    The model checking problem for $\LAMB$ is \Ptime-complete.
\end{theorem}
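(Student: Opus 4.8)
The plan is to prove the two directions separately: membership in \Ptime{} by analysing Algorithms~\ref{lambMC} and~\ref{update}, and \Ptime-hardness by inheritance from $\ATL$.

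For the upper bound I would first record \emph{correctness}: \textsc{MC}$(M,s,\varphi)$ returns \textit{true} iff $M_s \models \varphi$, proved by induction on the structure of $\varphi$, reading off each case directly from Definitions~\ref{def:semantics_static} and~\ref{def:semLAMB}. The temporal and Boolean cases (relegated to the appendix) are handled exactly as in the classical $\ATL$ model-checking algorithm of \cite{alur2002}, using the polynomial-time operator $Pre(M,C,Q)$ and a bottom-up labelling of states; the genuinely new cases $\alpha$, $@_\alpha\psi$, and $[\pi]\psi$ are immediate from the semantics and the construction of $M^\pi$ in \textsc{Update}. I would also note that the combined \textsc{MC}/\textsc{Update} recursion is well-founded: every recursive call, including the \textsc{MC}-call inside the substitution case of \textsc{Update}, is made on a \emph{strictly smaller} subformula of $\varphi$ (the formula $\psi$ occurring inside the update $p_\alpha:=\psi$ is a proper subformula of $[p_\alpha:=\psi]\varphi'$).

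The core of the complexity argument is a bound on the models that can arise during the run. Each occurrence of an update modality sits on a unique path from the root of the syntax tree of $\varphi$, so when the recursion reaches a given subformula occurrence it does so after applying a fixed sequence of at most $|\varphi|$ updates to $M$; hence at most $O(|\varphi|)$ distinct models are ever constructed, and the formula's branching does \emph{not} multiply this count, since each model is indexed by a node of the syntax tree. Moreover, the only update that enlarges the state space, $\lcirc{\alpha}$, adds exactly one self-looping state (falsifying all atoms), which increases $|M|$ by at most $|\Act|^{|\AG|}+2$; as $|\Act|^{|\AG|}$ is already counted inside $|M|$ (via $|\tau|$), every model $N$ occurring in the run has $|N| \le |M| + |\varphi|(|M|+2)$, i.e.\ size polynomial in $|M|+|\varphi|$. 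Consequently \textsc{MC} is invoked on at most $O(|\varphi|^2)$ pairs (one of $\le|\varphi|$ models, one of $\le|\varphi|$ subformulas), \textsc{Update} on at most $O(|\varphi|^2)$ such pairs, and each invocation performs — apart from its recursive subcalls — only the polynomial-time work of the $\ATL$ labelling procedure (a $Pre$-based fixpoint over the state set) or the elementary model surgery of Algorithm~\ref{update}. Multiplying, the total running time is polynomial in $|M|+|\varphi|$. I expect this bookkeeping to be the main obstacle: one must be careful that nesting of $\lcirc{\alpha}$ blows up the state space only \emph{additively} in $|\varphi|$, not multiplicatively, and that the non-tree-like interaction between \textsc{MC} and \textsc{Update} still yields only polynomially many (model, subformula) pairs.

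For the lower bound, model checking $\ATL$ is \Ptime-hard \cite{alur2002}, $\ATL$ is syntactically a fragment of $\LAMB$, and the $\LAMB$ semantics of Definition~\ref{def:semLAMB} restricts on $\ATL$-formulas to the standard $\ATL$ semantics, so the identity embedding is a (logspace) reduction witnessing \Ptime-hardness of model checking $\LAMB$. Together with the upper bound this yields \Ptime-completeness.
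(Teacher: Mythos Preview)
Your proposal is correct and follows essentially the same route as the paper: establish the upper bound by analysing Algorithms~\ref{lambMC} and~\ref{update}, bounding the number of distinct (model, subformula) pairs by $O(|\varphi|^2)$ and the size of each intermediate model by a polynomial in $|M|\cdot|\varphi|$, then invoking the standard $\ATL$ labelling argument for each pair; the lower bound is inherited from $\ATL$. Your write-up is in fact slightly more careful than the paper's (you explicitly argue well-foundedness of the mutual \textsc{MC}/\textsc{Update} recursion and note that $\lcirc{\alpha}$ grows the model only additively), but the structure and key ideas coincide.
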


\begin{remark}
To make the language of $\LAMB$ more succinct, we can extend it with constructs $[\pi \cup \rho] \varphi$, meaning ‘whichever update we implement, $\pi$ or $\rho$, $\varphi$ will be true (in both cases)’. The model checking of the resulting logic is \Pspace-complete. Details about the extension and proofs can be found in section \emph{A Note On Succinctness} in the Technical Appendix.
\end{remark}

\section{Related work }
\label{sec:related}


\paragraph*{Strategic Reasoning.}

From the perspective of strategic reasoning, our work is related to the research on rational verification and synthesis. The first is the problem of checking whether a temporal goal is 
satisfied in some (or all) game-theoretic equilibria of a CGM \cite{AbateGHHKNPSW21,GutierrezNPW23}. Rational synthesis consists in the automated construction of such a model \cite{FismanKL10, CFGR16}.  
In this direction, 
\cite{KR2024-44} investigated the problem of finding \emph{incentives} by manipulating the weights of atomic propositions to persuade 
agents to act towards a temporal goal. 

Recent work has also investigated the use of formal methods to verify and synthesize mechanisms for social choice using model checking and satisfiability procedures for variants of Strategy Logic \cite{SLKF_KR21,MittelmannMMP22,MittelmannMMP23}. 
While being able to analyse MAS with respect to complex solution concepts, all these works face high complexity issues. 
In particular, key decision problems for rational verification with temporal specifications are known to be \Dexptime-complete \cite{GutierrezNPW23} and model checking  Strategy Logic is \NONELEMENTARY for memoryfull agents \cite{MogaveroMPV14}. 
\rust{
Compared to these approaches, $\LAMB$ offers relatively high expressivity while maintaining  the \Ptime-completeness of its model checking problem.
}


\rust{
The recently introduced
\textit{obstruction ATL} \cite{catta23,catta24} ($\mathsf{O}\ATL$) allows reasoning 
about agents' strategic abilities 
while being hindered by an external force, called the \textit{Demon}. Being inspired by sabotage modal logic \cite{vanbenthem05}, in this logic  
the Demon is able to disable some transitions and thus impact the strategic abilities of the agents in a system. This is somewhat related to normative updates that we covered in Section \ref{sec:norms} and the \textit{module checking problem} \cite{kupferman2001module,jamroga2015module}, where agents interact with a non-deterministic environment that may inhibit access to certain paths of the computation tree. 

Notice that $\LAMB$ is significantly more general than the presented approaches 
as it allows to not only restrict transitions or access, but also change it in a more nuanced way by redirecting arrows (and, e.g., granting access to a state). Moreover, $\LAMB$ also allows adding \textit{new} states, as well as changing the valuations of propositions. Moreover, updates in $\LAMB$ are explicitly present in the syntax that enables explicit synthesis of model modifications.
}

\rust{
\paragraph*{Nominals.} Nominals are an integral part of \textit{hybrid logic} \cite{ARECES2007821} and is a common tool whenever one needs to refer to particular states on the syntax level. For example, nominals and other hybrid modalities are ubiquitous in the research on \textit{logics for social networks} (see \cite[Chapter 3]{minathesis} for a comprehensive overview). 

In the setting of DEL, tools and methods of hybrid logic have been used, for example, to relax the assumption of common knowledge of agents' names \cite{wang18}, to study the interplay between public announcements and distributed knowledge \cite{HANSEN201133}, and to tackle the information and intentions dynamics in interactive scenarios \cite{Roy2009}. Moreover, nominals were used to provide an axiomatisation of a hybrid variant of \textit{sabotage modal logic} \cite{vanbenthem23}, which extends the standard language of modal logic with constructs $\blacklozenge \varphi$ meaning `after removing some edge in the model, $\varphi$ holds' \cite{vanbenthem05,aucher18}.

Nominals were also used in \textit{linear-} and \textit{branching-time temporal logics} 
to refer to particular points in computation (see, e.g.,  \cite{blackburn99,Goranko2000-GORCTL-2,Franceschet_etal2003,franceschet06,Kara2009,Lange2009,BOZZELLI2010454,Kernberger2020}). 
In the framework of strategic reasoning, \cite{Huang_Meyden2018} 
used some ideas from hybrid logic,
but neither $@_{\alpha}$ nor nominals themselves. Hence, in terms of novelty,   
to the best of our knowledge, the  \textit{Hybrid ATL} ($\mathsf{HATL}$) proposed in this paper is the first attempt to combine nominals with the $\mathsf{ATL}$-style strategic reasoning.

}





\rust{
\paragraph*{The Interplay Between DEL and Strategic Reasoning.}

As we mentioned in the introduction, albeit DEL and various strategic logics being very different formalism, some avenues of DEL research has incorporated ideas from logics for strategic reasoning. Examples include the exploration of \textit{concurrent DEL games} \cite{maubert20}, \textit{alternating-time temporal DEL} \cite{delima14}, \textit{coalitions announcements} \cite{agotnes08,galimullin21b} and other forms of \textit{strategic multi-agent communication} (see, e.g., \cite{agotnes10GAL,galimullin24}).

To the best of our knowledge, DEL updates for CGMs, up until now, were considered only in \cite{galimullin21,galimullin2022action}, where the authors capture granting and revoking actions of singular agents as well as updates based on \textit{action models} \cite{bms22}. Both works are limited to the ne$\mathsf{X}$t-time fragment of $\ATL$ (so-called \textit{coalition logic} \cite{pauly02}). Moreover, they do not support such expressive features of $\LAMB$ as adding \textit{new} states and changing the valuation of propositional variables. Additionally, our arrow-redirecting operators allow for greater flexibility while dealing with agents' strategies. }

\balance 

\section{Discussion \& Conclusion}

We proposed $\LAMB$, a logic for updating CGMs that combines ideas from both the strategy logics tradition ($\ATL$ in our case) and the $\mathsf{DEL}$ tradition. We have argued that $\LAMB$ can be useful for reasoning about a variety of dynamic phenomena in MAS thanks to the modular nature of its primitive update operators. Finally, we have explored the expressivity hierarchy of $\LAMB$ and its fragments, and demonstrated that the model checking problem for $\LAMB$ is \Ptime-complete.   

As we have just scratched the surface of dynamic updates for CGMs, there is a plethora of open questions. One of the immediate ones is to explore the satisfiability problem for $\LAMB$. \rust{
Another one is to assign costs to different types of model changes. In such a way, we will be able to generalise the bounded modification synthesis to the scenarios, where some changes are more costly to implement and thus are less optimal. Moreover, having costs associated with model changes will allow for a direct comparison of (this generalised version of) $\LAMB$ and Obstruction $\ATL$ (see Section \ref{sec:related}). 
We conjecture that in such a scenario, $\LAMB$ will subsume $\mathsf{O}\ATL$.  
}


\rust{
Apart from that, in Section \ref{sec:synth}, we mentioned the exploration of constructive solutions to the synthesis problem, both bounded and unbounded versions, as a promising area of further research. One way to go about it is, perhaps, taking intuitions from the constructive approaches to the $\ATL$ satisfiability \cite{goranko06,walther06,goranko09}. 
Those solutions can be then embedded into some of the existing model checkers, like MCMAS \cite{lomuscio17} and STV \cite{kurpiewski19}, so that if a model checker returns FALSE for a given model and property $\varphi$, the tool automatically constructs an update that can fix the model so that it satisfies $\varphi$.
}

In a more general setting of CGM updates, one can also consider modifications that cannot be captured by $\LAMB$. For example, we can explore the effects of granting or revoking actions to/from certain agents, changing the number of agents, or any combinations thereof with the $\LAMB$ updates. \rust{As mentioned in Section \ref{sec:related},} some preliminary work on changing the actions available to agents has been done for the ne$\mathsf{X}$t-time fragment of $\ATL$ \cite{galimullin21,galimullin2022action}.

Finally, as $\LAMB$ is based on $\ATL$, we find it particularly interesting to consider more expressive base languages, like, $\ATL^\star$ or variations of strategy logic $\mathsf{SL}$ \cite{MogaveroMPV14}. Of particular interest is the \textit{simple goal fragment of} $\mathsf{SL}$ \cite{belardinelli19}, which is strictly more expressive than $\ATL$ and yet allows for  \Ptime-time model checking. 
\rust{
We would also like to consider the ideas of model updates in the setting of 
STIT logics \cite{horty01,broersen15}.  
Additionally, we believe that ideas from Separation Logics
\cite{Reynolds02,DemriD15,DemriF19}, which were proposed to verify programs with mutable data structures, could also provide insights on how to reason about separation and composition-based modifications of MAS.  
}

\begin{acks}
\ifarxiv This project has received funding from the European Union’s Horizon 2020 research and innovation programme under the Marie Skłodowska-Curie grant agreement No 101105549. 
\else
This project has been supported by the  EU H2020 Marie Sklodowska-Curie project with grant agreement No 101105549. 
\fi
\end{acks}



\bibliographystyle{ACM-Reference-Format} 
\bibliography{REF}

\clearpage
\appendix
\section*{Technical Appendix}
\subsection*{Model checking $\LAMB$}
Full model checking algorithm for $\LAMB$.
\begin{breakablealgorithm}
	\caption{An algorithm for model checking $\LAMB$}
	\small
	\begin{algorithmic}[1] 		
		\Procedure{MC}{$M, s, \varphi$}		
      \Case {$\varphi = p$}
            \State{\textbf{return} $s \in L(p)$}
        \EndCase
        \Case {$\varphi = \alpha$}
            \State{\textbf{return} $s \in L(\alpha)$}
        \EndCase
        \Case {$\varphi = @_\alpha \psi$}
        \If {$L(\alpha) \neq \emptyset$}
            \State{\textbf{return}  $\textsc{MC} (M, L(\alpha), \psi)$}
        \Else
            \State{\textbf{return}  \textit{false}}
        \EndIf
        \EndCase
        \Case {$\varphi = \lnot \psi$}
            \State{\textbf{return}  not $\textsc{MC} (M, s, \psi)$}
        \EndCase
       \Case {$\varphi = \psi \land \chi$}
            \State{\textbf{return} $\textsc{MC} (M,s,\psi)$ and  $\textsc{MC} (M,s,\chi)$}
        \EndCase
      \Case {$\varphi = \llb C \rrb \X \psi$}
             \State{\textbf{return} $s \in Pre(M, C, \{t \in S | \textsc{MC}(M, t, \psi)\})$}
        \EndCase
    \Case{$\varphi = \llb C \rrb \psi \U  \varphi$}
    \State{$X:= \emptyset$ and $Y:= \{t \in S \mid \textsc{MC} (M,t, \varphi)$\}}
    \While{$Y \neq X$}
        \State{$X:= Y$}
        \State{$Y := \{t \in S \mid \textsc{MC} (M,t, \varphi)\} \cup (Pre(M, C, X) \cap \{t \in S \mid \textsc{MC} (M,t, \psi)\})$}
    \EndWhile
    \State{\textbf{return} $s \in X$}
    \EndCase

    \Case{$\varphi = \llb C \rrb \psi \R \varphi$}
    \State{$X:= S$ and $Y := \{t \in S \mid \textsc{MC} (M,t, \varphi)\}$}
    \While{$Y \neq X$}
        \State{$X:= Y$}
        \State{$Y := \{t \in S \mid \textsc{MC} (M,t, \varphi)\} \cap (Pre(M, C, X) \cup \{t \in S \mid \textsc{MC} (M,t, \psi)\})$}
    \EndWhile
    \State{\textbf{return} $s \in X$}
    \EndCase
    
\Case {$\varphi = [\pi] \psi$ with $\pi\in \{p_\alpha:=\psi, \alpha \xrightarrow{A} \beta, \lcirc{\alpha}\}$}
			\State{\textbf{return} $\textsc{MC} (\textsc{Update} (M, s, \pi), s, \psi)$}
		\EndCase
   \EndProcedure

	\end{algorithmic}
\end{breakablealgorithm}

\subsection*{A Note On Succinctness}
When we discussed normative updates on CGMs in Section \emph{Dynamic MAS through the lens of $\LAMB$}, we, as a use case, considered sanctioning norms $\mathsf{SN}$'s and their effects on a given system. For example, we may want to check whether the systems is \textit{compliant} with some set of norms, i.e. whether none of the normative updates violate some desired property $\varphi$ (e.g. a safety requirement).  In other words, for a given model $M_s$ and a set of norms $\mathcal{N} = \{\mathsf{SN}_1, ..., \mathsf{SN}_n\}$, we can explicitly check whether $M_s$ satisfies $\varphi$ after each $\mathsf{SN}$. As described in the normative updates section, we can model the effects of norms $\mathcal{N}$ in the language of $\LAMB$. Hence, the compliance just described can be expressed by formula $\bigwedge_{i \in \{1,...,n\}} [\mathit{SN}_i]\varphi$, where $\mathit{SN}_i$ are the translation of norms from $\mathcal{N}$'s into $\LAMB$ updates.

The reader might have noticed, however, that such a formula is not quite succinct, as it may have exponentially many repetitions, especially in the case of nested update operators. This happens, for example, if we have several sets of norms that we would like to implement consecutively.  For instance, if the desired formula looks like $\bigwedge_{i \in \{1, 2\}} \bigwedge_{j \in \{1,2,3\}}[\mathit{SN}_i^a][\mathit{SN}_j^b]\varphi$, writing it out in full yields us 
\begin{gather*}
[\mathit{SN}_1^a]([\mathit{SN}_1^b]\varphi \land [\mathit{SN}_2^b]\varphi \land [\mathit{SN}_3^b]\varphi) \land \\
[\mathit{SN}_2^a]([\mathit{SN}_1^b]\varphi \land [\mathit{SN}_2^b]\varphi \land [\mathit{SN}_3^b]\varphi).
\end{gather*}
The reader may think of a fleet of warehouse robots that we want to govern on two levels: first set of norms would regulate the traffic laws in the warehouse to avoid collisions, and the second set would regulate strategic objectives of robots \textit{given the implemented traffic laws}.

To deal with such a blow-up, we can consider a variant of $\LAMB$, called $\LAMB^\cup$, that extends the former with constructs $[\pi \cup \rho]\varphi$ with the intended meaning `whichever update we implement, $\pi$ or $\rho$, $\varphi$ will be true (in both cases)'. Such a union of actions, inherited from Propositional Dynamic Logic ($\mathsf{PDL}$) \cite{fischer79}, is also quite used in $\mathsf{DEL}$ (see, e.g., \cite{bms22,aucher13}). Corresponding fragments of $\LAMB^\cup$ are denoted as  $\mathsf{SLAMB}^\cup$ and $\mathsf{ALAMB}^\cup$.  

Semantics of the union operator is defined as 
  \begin{alignat*}{3}
        &M_s \models [\pi \cup \rho] \varphi && \text{ iff } &&M_s \models [\pi] \varphi \text{ and } M_s \models [\rho] \varphi
\end{alignat*} 
It is easy to see now that $[\pi \cup \rho] \varphi \leftrightarrow [\pi] \varphi \land [\rho] \varphi$ is a valid formula (i.e. it is true on all CGMs $M_s$), and hence we can translate every formula $\LAMB^\cup$ into an equivalent formula of $\LAMB$ (the same for $\mathsf{SLAMB}^\cup$ and $\mathsf{ALAMB}^\cup$). Therefore, expressivity-wise, $\LAMB^\cup \approx \LAMB$, $\mathsf{SLAMB}^\cup \approx \mathsf{SLAMB}$, and $\mathsf{ALAMB}^\cup \approx \mathsf{ALAMB}$.

As noted above, however, that $\LAMB^\cup$ allows us to express, for example, the compliance property for normative updates exponentially \textit{more succinct}. Recalling our example of \begin{gather*}
[\mathit{SN}_1^a]([\mathit{SN}_1^b]\varphi \land [\mathit{SN}_2^b]\varphi \land [\mathit{SN}_3^b]\varphi) \land \\
[\mathit{SN}_2^a]([\mathit{SN}_1^b]\varphi \land [\mathit{SN}_2^b]\varphi \land [\mathit{SN}_3^b]\varphi),
\end{gather*} in the syntax of $\LAMB^\cup$, we can succinctly write an equivalent $[\mathit{SN}_1^a \cup \mathit{SN}_2^a][\mathit{SN}_1^b \cup \mathit{SN}_2^b \cup \mathit{SN}_3^b]\varphi$.

This increased succinctness, however, comes at a price. The complexity of the model checking problem for $\LAMB^\cup$ (and its fragments) jumps all the way to \Pspace-complete. Here we present a sketch of the argument.

\begin{theorem}
\label{lambcupMC}
    The model checking problem for $\LAMB^\cup$, $\mathsf{SLAMB}^\cup$ and $\mathsf{ALAMB}^\cup$ is \Pspace-complete.
\end{theorem}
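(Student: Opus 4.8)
The plan is to prove the two bounds separately. For the \Pspace\ upper bound, the key point is \emph{not} to translate $\LAMB^\cup$ formulas into $\LAMB$ (that incurs exactly the exponential blow-up discussed above), but to run a recursive procedure directly, extending Algorithms~\ref{lambMC} and~\ref{update} with one new case: for $[\pi \cup \rho]\psi$, return the conjunction of $\textsc{MC}(M,s,[\pi]\psi)$ and $\textsc{MC}(M,s,[\rho]\psi)$. The recursion \emph{tree} may have exponentially many nodes, but at any moment the live recursive calls form a single root-to-leaf branch of this tree, and along such a branch each call is made either on a strict subformula of the current formula or, in the temporal fixpoint cases, inside a \textbf{while}-loop whose successive iterations reuse the same space; hence the recursion stack has depth $O(|\varphi|)$. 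Moreover, every model stored in a stack frame is obtained from $M$ by applying the primitive updates lying on the corresponding path of the syntax tree, and since their total number of symbols is bounded by $|\varphi|$ and each $\lcirc{\alpha}$ adds at most one state, every such model has size polynomial in $|M|+|\varphi|$. As each frame (a model, a state, a subformula) then takes polynomial space, and the $Pre$-computations and fixpoint iterations carried out within a frame also use only polynomial space, the whole procedure runs in polynomial space. Since $\mathsf{SLAMB}^\cup$ and $\mathsf{ALAMB}^\cup$ are fragments of $\LAMB^\cup$, the same bound applies to them.

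For the lower bound, I would reduce from the validity problem for quantified Boolean formulas. The union operator gives a universal quantifier over a binary choice directly, $[\pi \cup \rho]\chi \equiv [\pi]\chi \land [\rho]\chi$, and, since negation is available and all primitive updates are deterministic (total functions on models), also an existential one: $\lnot[\pi \cup \rho]\lnot\chi \equiv [\pi]\chi \lor [\rho]\chi$. Given $\Phi = Q_1 x_1 \cdots Q_n x_n\,\psi$ with $\psi$ quantifier-free, for $\mathsf{SLAMB}^\cup$ I would take $M^\Phi$ to be a single self-looping state named $\alpha$ over one agent and one action, with $p^1,\dots,p^n$ all false; the choice of $x_i$ is encoded by the update $(p^i_\alpha := \top) \cup (p^i_\alpha := \bot)$ and the literal $x_i$ by the atom $p^i$. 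Building the formula inside-out — wrapping $\psi[x_i \mapsto p^i]$ with $[(p^i_\alpha:=\top)\cup(p^i_\alpha:=\bot)](\cdot)$ when $Q_i=\forall$ and with $\lnot[(p^i_\alpha:=\top)\cup(p^i_\alpha:=\bot)]\lnot(\cdot)$ when $Q_i=\exists$ — yields a polynomial-size formula that holds at the designated state iff $\Phi$ is valid. For $\mathsf{ALAMB}^\cup$, which lacks substitutions, I would instead use a model with a $q$-state named $\gamma$ and, for each variable, a $\lnot q$-state $v_i$ named $\delta_i$ (all self-loops, one agent, one action $a$), encode the choice for $x_i$ by $(\delta_i \xrightarrow{a} \gamma) \cup (\delta_i \xrightarrow{a} \delta_i)$, and read the literal $x_i$ off as $@_{\delta_i}\llb\emptyset\rrb\X q$. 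The same inside-out construction then establishes \Pspace-hardness of $\mathsf{ALAMB}^\cup$; since $\LAMB^\cup$ extends both fragments, it is \Pspace-hard as well, so all three problems are \Pspace-complete.

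I expect the space analysis in the upper bound to be the main obstacle: one must argue carefully that, even though nested $\cup$-updates cause exponential branching of the computation, the number of simultaneously live stack frames — each carrying its own updated model — stays linear in $|\varphi|$, and that no model ever grows beyond polynomial size. The two lower-bound reductions are comparatively routine once the correspondence between $[\,\cdot\cup\cdot\,]$ and $\forall$ (and its dual and $\exists$) is in place; the only real subtlety there is that $\mathsf{SLAMB}^\cup$ and $\mathsf{ALAMB}^\cup$ need different gadgets — one substitution-based, one arrow-based — to witness hardness of each fragment separately.
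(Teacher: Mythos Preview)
Your proposal is correct and follows essentially the same approach as the paper: the upper bound extends the recursive model-checking procedure with the single rule $\textsc{MC}(M,s,[\pi\cup\rho]\psi) = \textsc{MC}(M,s,[\pi]\psi)\wedge\textsc{MC}(M,s,[\rho]\psi)$ and argues that the live recursion stack has depth $O(|\varphi|)$ with each frame carrying a model of size polynomial in $|M|+|\varphi|$; the lower bound is the QBF reduction using $[\pi\cup\rho]$ for $\forall$ and its dual for $\exists$, with your $\mathsf{SLAMB}^\cup$ gadget virtually identical to the paper's. The only notable difference is in the $\mathsf{ALAMB}^\cup$ gadget: the paper uses a single hub state with $n$ distinct actions $a_1,\dots,a_n$ (redirecting $x_t\xrightarrow{a_k}x_k$ versus $x_t\xrightarrow{a_k}x_t$ and reading $x_k$ as $\llb\{1\}\rrb\X p_k$ from the hub), whereas you use a single action and $n$ source states $\delta_i$, reading $x_i$ via $@_{\delta_i}\llb\emptyset\rrb\X q$. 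Both encodings are polynomial and equally valid; yours leans on the hybrid $@$ operator while the paper multiplexes through the action alphabet.
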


\begin{proof}
The model checking algorithm for  $\LAMB^\cup$ extends the one for $\LAMB$ with one additional case (Algorithm \ref{pspaceMC}).

\begin{breakablealgorithm}
	\caption{An algorithm for model checking $\LAMB^\cup$}\label{pspaceMC} 
 \footnotesize
	\begin{algorithmic}[1] 		
		\Procedure{MC$^\cup$}{$M, s, \varphi$}

\Case {$\varphi = [\pi \cup \rho] \psi $ }
			\State{\textbf{return} $\textsc{MC$^\cup$} (M,s, [\pi]\psi)$ and $\textsc{MC$^\cup$} (M,s, [\rho]\psi)$}
		\EndCase
   \EndProcedure

	\end{algorithmic}
\end{breakablealgorithm}

New models require space that is bounded by $|M|\cdot|\varphi|$.
There are at most $|\varphi|$ symbols in $\varphi$, and hence the total memory space required by the algorithm is bounded by $|M| \cdot |\varphi|^2$.


\Pspace-hardness is shown via the reduction from the satisfiability of quantified Boolean formulas (QBFs). W.l.o.g., we assume that there are no free variables in QBFs, and that each variable is quantified only once. Given a QBF $\Psi := Q_1 x_1 ... Q_n x_n \Phi (x_1, ..., x_n)$ with $Q_i \in \{\exists, \forall\}$, we construct a CGM $M_s$ and a formula $\psi \in \LAMB^\cup$ (both of polynomial size with respect to $\Psi$), such that $\Psi$ is true if and only if $M_s \models \psi$.

CGM $M = \lb S, \tau, L\rb$ is constructed over one agent and $\Act = \{a_1, ..., a_n\}$, where $S = \{t, s_1, ..., s_{n}\}$, $\tau(s_i, a_j) = s_i$ for all $s_i \in S$ and $a_j \in \Act$, $L(p_i) = \{s_i\}$ with $p_i \in P$, $L(x_i) = \{s_i\}$ with $x_i \in \Nom$, and $L(x_t) = \{t\}$. Intuitively, the CGM consists of $n+1$ states with reflexive loops for each action of the agent. Propositional variables $p_i$ and nominals $x_i$ are true only in the $i$th state. 

The translation of $\Psi$ into a formula of $\LAMB$ is done recursively as follows:
\begin{align*}
    \psi_0 &:= \Phi(\llb 1 \rrb p_1, ..., \llb 1 \rrb p_n) \\
    \psi_k &:=
    \begin{cases}
        [x_t \xrightarrow{a_k} x_k \cup x_t \xrightarrow{a_k} x_t] \psi_{k-1} &\text{if } Q_k = \forall\\ 
        \lnot [x_t \xrightarrow{a_k} x_k \cup x_t \xrightarrow{a_k} x_t] \lnot \psi_{k-1} &\text{if } Q_k = \exists\\
    \end{cases}\\
\psi &:= \psi_n
\end{align*}
Now we argue that 
\[Q_1 x_1 ... Q_n x_n \Phi (x_1, ..., x_n) \text{ is satisfiable iff } M_t \models \psi.\]
Redirection of a transition labelled with an action profile $a_k$ from state $t$ to state $s_k$ models setting variable $x_k$ to 1. Similarly, if the transition brings us from $t$ back to $t$, then variable $x_k$ is set to 0. Since the transition function is deterministic, the choice of truth values is unambiguous. 

To model quantifiers, we use the union operator. The universal quantifier $\forall x_k$ is translated into $[x_t \xrightarrow{a_k} x_k \cup x_t \xrightarrow{a_k} x_t] \psi_{k-1}$ meaning that no matter what the chosen truth-value of $x_k$ is, subformula $\psi_{k-1}$ is true. Similarly, for the existential quantifier $\exists x_k$, we state that there is a choice of the truth-value of $x_k$ such that $\psi_{k-1}$ is true. Once the valuation of all $x_k$ has been set, the evaluation of the Boolean subformula of the QBF corresponds to the reachability of $s_k$'s via an $a_k$-labelled transitions. 

In the hardness proof just presented 
 we used only arrow change operators. Hence, we have shown that $\mathsf{ALAMB}^\cup$, and therefore $\LAMB^\cup$, have \Pspace-complete model checking problems.  

Now, we turn to the the case of substitutions. Let $$\Psi := Q_1 x_1 ... Q_n x_n \Phi (x_1, ..., x_n)$$ be a QBF. We construct a CGM $M = \lb S, \tau, L\rb$ over one agent and one action $\Act = \{a\}$, where $S = \{s\}$, $\tau(s , a) = s$, $L(p^i) = \{s\}$ with $p^1, ..., p^n \in P$, $L(\alpha) = \{s\}$ with $\alpha \in \Nom$. Intuitively, the CGM consists of a single state with a single reflexive loop. Propositional variables $p^i$'s, corresponding to QBF variables $x_i$'s, and nominal $\alpha$ is true in the single state. 

The translation of $\Psi$ into a formula of $\LAMB$ is as follows:

\begin{align*}
    \psi_0 &:= \Phi(p^1, ..., p^n) \\
    \psi_k &:=
    \begin{cases}
        [p^k_\alpha:= \top \cup p^k_\alpha := \bot] \psi_{k-1} &\text{if } Q_k = \forall\\ 
        \lnot [p^k_\alpha:= \top \cup p^k_\alpha := \bot] \lnot \psi_{k-1} &\text{if } Q_k = \exists\\
    \end{cases}\\
\psi &:= \psi_n
\end{align*}

It is relatively straightforward to see that formula $\psi$ explicitly models quantifiers $Q_i$ for variables $p^i$. Then subformula  $\Phi(p^1, ..., p^n)$ is trivially evaluated in the single state of the model. 
\end{proof}

\end{document}
